\newtheorem{theorem}{Theorem}
\newtheorem{lemma}{Lemma}
\newtheorem{corollary}[lemma]{Corollary}
\newtheorem{remark}{Remark}
\newtheorem{definition}{Definition}
\newtheorem{proposition}{Proposition}
\newtheorem{example}{Example}
\newcommand{\utag}[2]{\mathop{#2}\limits^{\text{(#1)}}}
\newcommand{\uref}[1]{(#1)}
\long\def\symbolfootnote[#1]#2{\begingroup
\def\thefootnote{\fnsymbol{footnote}}\footnote[#1]{#2}\endgroup}
\newcommand{\cB}{\mathcal{B}}
\newcommand{\cD}{\mathcal{D}}
\newcommand{\cF}{\mathcal{F}}
\newcommand{\cN}{\mathcal{N}}
\newcommand{\cO}{\mathcal{O}}
\newcommand{\cQ}{\mathcal{Q}}
\newcommand{\cU}{\mathcal{U}}
\newcommand{\cZ}{\mathcal{Z}}
\newcommand{\sP}{\mathscr{P}}
\newcommand{\sW}{\mathscr{W}}
\newcommand{\Ie}{\textit{i.e., }}
\newcommand{\Eg}{\textit{e.g., }}
\DeclareMathOperator*{\argmax}{arg\,max}
\DeclareMathOperator*{\argmin}{arg\,min}
\newcommand{\mathleft}{\@fleqntrue\@mathmargin0pt}
\newcommand{\mathcenter}{\@fleqnfalse}
  \pgfplotsset{compat=newest}
  \definecolor{darkgreen}{RGB}{34,139,34}
  \definecolor{lightblue}{rgb}{0.00000,0.44700,0.74100}
  \definecolor{mycolor3}{rgb}{0.92900,0.69400,0.12500}%
\newcommand{\pb}[1]{p\left(#1\right)}
\newcommand{\wb}[1]{w\left(#1\right)}
\algnewcommand{\algorithmicforeach}{\textbf{for each}}
\definecolor{darkgreen}{RGB}{34,139,34}
\definecolor{bleudefrance}{rgb}{0.19, 0.55, 0.91}
\definecolor{darkgreen2}{RGB}{242,248,241}
\definecolor{bleudefrance2}{RGB}{229, 240, 253}
\definecolor{magenta2}{RGB}{252, 241, 249}
\begin{document}

\title{
ON-OFF Privacy Against Correlation Over Time
}
\author{
    Fangwei~Ye,~\IEEEmembership{Member,~IEEE},
    Carolina~Naim,~\IEEEmembership{Graduate Student Member,~IEEE} and Salim~El~Rouayheb,~\IEEEmembership{Member,~IEEE}

	\thanks{
	This paper was presented in part in ``Preserving ON-OFF Privacy for Past and Future Requests,'' IEEE
	Inf. Theory Workshop (ITW), Visby, Gotland, 2019.  This work was supported by NSF Grant CCF 1817635.
}
    \thanks{
    F.~Ye, C.~Naim and S.~El~Rouayheb are with the Department of Electrical and Computer Engineering, Rutgers, The State University of New Jersey, Piscataway, NJ 08854, USA (email: fangwei.ye@rutgers.edu; carolina.naim@rutgers.edu; salim.elrouayheb@rutgers.edu).
    }

}

\maketitle

\begin{abstract}
We consider the problem of ON-OFF privacy in which a user is interested in the latest message generated by one of $n$ sources available at a server. The user   has the choice to turn privacy ON or OFF depending on whether he wants to hide his interest at the time or not. The challenge of allowing the  privacy to be toggled  between ON and OFF  is that the user's online behavior is correlated over time. Therefore, the user cannot simply ignore the privacy requirement when privacy is OFF.

We represent the user’s correlated requests by an $n$-state Markov chain. 
Our goal is to design ON-OFF privacy schemes with optimal download rate that ensure privacy for past and future requests. 
We devise a polynomial-time algorithm to construct an ON-OFF privacy scheme. Moreover, we present an upper bound on the achievable rate.  
We show that the proposed scheme is optimal and the upper bound is tight for some special families of  Markov chains. We also give an implicit characterization of the optimal achievable rate as a linear  programming (LP).

\end{abstract}

\begin{IEEEkeywords}
Information-theoretic privacy, private information retrieval, Markov chains
\end{IEEEkeywords}

\section{Introduction}
\label{introduction}

\subsection{Motivation}
\label{sec:motivation}
In the current data-driven world, users' information is always being collected online, and its privacy has become a significant concern. Many users wish to keep private their personal information, such as their age, sex, political views,  health disorders, etc. 
  Significant research has been devoted to study algorithms that   preserve users' privacy. Some of the proposed approaches include applying anonymization techniques\cite{Sweeney_2002}, differential privacy algorithms\cite{Dwork_2006}, and private information retrieval methods\cite{Chor_1995}.

Privacy, however, comes at a cost. Privacy-preserving algorithms  typically incur higher overheads in terms of computation, memory, and delay. These incurred costs motivate one to think of privacy as an expensive commodity and, therefore,  to allow  the user  to request it, \Ie turn privacy ON,  only when needed; otherwise, turn it OFF. The user may choose to switch between privacy being ON and OFF depending on several criteria, such as location (country, workplace vs.\ home, etc.), network connection (public or private network), devices (shared vs.\ personal machines) being used, or service quality (privacy-preserving algorithms typically induce more overheads), to name a few.

At a conceptual level, ON-OFF privacy algorithms enable privacy to be switched between ON and OFF whenever desired. One of the main challenges in designing such algorithms is correlation. For instance, a user's online behavior is personal which creates correlation over time. That is, by monitoring the user's behavior when his privacy is OFF, one may learn about the user's behavior when his privacy was ON. Therefore, the user cannot simply ignore the privacy requirement when privacy is OFF.

Take for example a user who is subscribed to two political online video channels, one is pro-right, and the other is pro-left. The user is interested in watching the latest videos posted by one of these channels. 
Correlation over time here is due to the fact that a typical user is more likely to keep watching videos from the same channel. 
One may think of a scenario where the user is more likely to watch the top item in his recommended list that depends on the previously watched videos.
Therefore, when the user switches his privacy from ON to OFF, the user cannot openly request the video he is interested in because this leaks information on what he was watching right before\,(when privacy was ON).

In this work, we abstract the previous example into the information retrieval setting, i.e., downloading messages from a server. The user may choose to turn privacy ON or OFF at each instant.
When privacy is ON, the user wants to completely hide, in an information-theoretic sense, his interest from the server. Otherwise, when privacy is OFF, the user does not worry about the privacy of his interest at that particular instant. Nevertheless, he must    be careful not to leak information about his previous or future interests that he wants to keep private. 
Our objective is to construct ON-OFF privacy schemes that:
\begin{enumerate}
\item Deliver to the user his request while ensuring perfect information-theoretic privacy against the server.   That is, the observations of the server  must be statistically independent of the user's interests when Privacy is ON. 
\item Maximize the download rate or equivalently minimize the amount of downloaded information.
\end{enumerate}

\subsection{Related Work}
The study of information-theoretic measures for privacy has received significant interest in the literature (see for e.g.
  \cite{Sankar_privacy_utility,Gertner_2000,Bezzi_2010,Nekouei_2019,Flavio_2012}).
  The closest problem to the ON-OFF privacy problem studied in this paper is the private information retrieval\,(PIR) from a single server\cite{Chor_1995}, which can be viewed as a special case (when privacy is always ON) of the ON-OFF privacy problem. 
In this case, it is known that to achieve information-theoretic privacy, the user must download all the messages, except in the case when the user has some side information \cite{Kadhe_2017,Li_side}. Recently, there  has been significant progress on PIR with multiple servers with a focus on download rate and coded data (\Eg \cite{Shah_2014,  Sun_2017, Tajeddine_2016, Freij-Hollanti_2017, Banawan_2018} and references therein).

A related problem that considers privacy with correlation, namely location privacy, was studied in \cite{Shokri_2011,Chatzikokolakis_2014,Xiao_2015,Shokri_2017,trace_privacy,Gunduz_2021,Gedik_anonymity,k-anonym_2010,Tandon}. 
The privacy notions studied therein include
$k$-anonymity \cite{Gedik_anonymity,k-anonym_2010}, (extended) differential privacy \cite{trace_privacy,Chatzikokolakis_2014,Xiao_2015}, and distortion privacy  \cite{Shokri_2011,Shokri_2017}, which all differ from the information-theoretic privacy measure studied in this paper. The works of  \cite{Gunduz_2021,Tandon} recently studied the information-theoretic privacy measure in location-privacy protection mechanisms, and their privacy metric was defined by the mutual information between the released data and the true traces. In this paper's language, it can be viewed as the case when privacy is always ON. 
However, in this paper, we want to prevent the adversary from inferring a selective part of the requests specified by an ON or OFF privacy status, and the simple time-sharing (switching between a private and a non-private scheme according to the privacy status) approach is not permissible due to the correlation.

The ON-OFF privacy problem was studied by the authors first in \cite{Naim_2019}. The focus was on preserving the privacy of past requests for which privacy was ON. This paper is based on  the setting studied later in \cite{Ye_2019}   which   requires   privacy of both past and future requests. The work in  \cite{Ye_2019} studied
the special case when $n=2$ sources, and an optimal scheme and a tight upper bound on the rate  were presented therein. The concept of ON-OFF privacy was also applied to preserve privacy of sensitive genotypes in genomics in  \cite{Genotype_2020}.

\subsection{Contributions}

To study how correlation affects privacy, we focus in this paper on the simplest non-trivial correlation model given by a Markov chain. That is, we assume that  the user's requests to the server  are correlated in time according to a Markov chain. We also assume that  the user knows his future requests into a window of size $\omega$\footnote{This can happen in applications where  the user places his requests in a queue of size $\omega$. For example, the user may know what he will be watching next since it is the next item in a playlist or the top recommendation in a recommended list.}.

 Under this model, our main result is summarized in Theorem~\ref{theorem:main} which: (i) gives a general upper bound on the download rate;  and (ii) gives an achievable rate obtained by an ON-OFF privacy scheme having  polynomial time complexity in $n$. 
 
 We show that our proposed scheme is optimal, \Ie the upper bound is tight, for a   family of  Markov chains for $n>2$.
 
  For $n=2$ sources, this scheme is equivalent to the one in \cite{Ye_2019} and therefore is always optimal.  Therefore, the results in this paper can be viewed as a   generalization of  the earlier results in  \cite{Ye_2019} on $n=2$ sources   to  any $n\geq 2$ sources.

We also give an implicit characterization of 
the optimal achievable rate, which relies on solving a linear program (LP) with an exponential number (in $n$) of variables and constraints. Thus, it is intractable to tackle it using standard LP solvers (\Eg \cite{LP-book}).    
From that perspective,   our results can be viewed  as leveraging the special structure of the problem to provide  an efficiently computable upper bound and a polynomial time scheme.

\subsection{Organization}

The rest of the paper is organized as follows. In Section~\ref{section:formulation}, we describe the formulation of the ON-OFF privacy problem. 
We present our main result, Theorem~\ref{theorem:main}, in Section~\ref{section:result}, and 
its corollaries in Section~\ref{sec:Special_Markov}. 
In Sections \ref{section:scheme} and \ref{section:algorithm_achievable}, we propose  an efficient ON-OFF privacy scheme that gives the achievable bound in Theorem~\ref{theorem:main}. In Section~\ref{section:converse}, we derive the  upper bound, in Theorem~\ref{theorem:main},  on the achievable rate. Finally, we present an implicit characterization of the optimal achievable rate 
in Section~\ref{section:discussion}. We conclude   in Section~\ref{section:conclusion}.



\section{Problem Formulation}
\label{section:formulation}
\subsection{System Model}
A single server stores $n$ information sources  $\{\sW_i: i\in \cN\}$, where $\cN:=\{1,2,\ldots,n\}$. 
The system is time-varying, and the time index $t$ is assumed to be discrete, \Ie $t \in \mathbb{N}$, throughout this paper. At each time $t$, each source $\sW_i$ generates a new message $W_{i,t}$ of length $L$, which is independent of previously generated messages $\left\{W_{i,j}: j=0,\ldots,t-1\right\}$. Without loss of generality, we assume that $W_{i,t}$ for $i \in \cN$ and $t \in \mathbb{N}$ are independently and identically drawn from the uniform distribution over $\{0,1\}^{L}$. 

At time $t$, the user is interested in retrieving the latest message generated by a desired source, \Ie one of the messages from $\{W_{i,t}: i \in \cN \}$. In particular, let $X_t$ be the source of interest at time $t$, which takes values in $\mathcal{N}$. In the sequel, we will call $X_t$ the \emph{user's request} at time $t$. For notational simplicity, we drop $t$ from $W_{i,t}$ when the time index $t$ is clear from context, \Ie $W_{i,t}$ will be denoted by $W_i$. To retrieve the desired message, the user is allowed to construct a query $Q_t$ and send this query to the server. Upon receiving the query, the server responds to the user by producing an answer $A_t$. After receiving the answer, the user should be able to recover the message $W_{X_t}$ that he is interested in. 

Meanwhile, the user may wish to hide the identity of his source of interest at time $t$. Specifically, the user may choose the \emph{privacy status} $F_t$ to be ON or OFF. When $F_t$ is ON, the user wishes to keep $X_t$ private and when $F_t$ is OFF, the user is not concerned with hiding $X_t$. We assume that the privacy status $\{F_t:t \in \mathbb{N}\}$ is independent of the user's requests, and the user's privacy status $\{F_i: i \leq t\}$ is known and recorded by both the server and the user at time $t$. 

We assume in our model  that the privacy status is independent of the user's requests because as mentioned in Section~\ref{sec:motivation}, the user may choose privacy to be ON or OFF depending on many factors such as location, network connection, devices or service quality etc, and in general these factors are independent of the user's requests.

In this paper, we are particularly interested in the case where the requests $X_t$ form a Markov chain, \Ie $\{X_t: t \in \mathbb{N}\}$ is generated by a (discrete) Markov source. The transition matrix $P$ of the Markov chain is known by both the server and the user, and the transition probability from state $i$ to state $j$ is denoted by $P_{i,j}$.

Moreover, we assume that the user knows his future requests in a window of positive size $\omega$\footnote{If the window size $\omega=0$, \Ie no future requests are known, 
we have to relax the the stringent privacy requirement \eqref{eq:privacy} defined in this work to a  weaker  sense  where  only  past  requests  are  protected.   This  falls  into  a  different model studied in \cite{Ye_2020}.}, 
which means that at time $t$, the user knows the future requests $\{X_{t+1},\ldots,X_{t+\omega}\}$ in addition to the current and all past requests $\{X_0,\ldots,X_t\}$. 
This models several scenarios where user's requests are in a queue. One can think of the situation where the user places
his requests in a playlist when watching videos.

The system mainly consists of two encoding functions, which we describe below. Let $[t]$ denote $\{0,1,\ldots,t\}$ and $X_{[t]}$ denote  $\{X_0,X_1,\dots,X_t\}$ for $t\in \mathbb{N}$ in the sequel.

\noindent \textbf{Query encoding function:}
The query $Q_t$, at time $t$, is generated by a query encoding function $\phi_t$. Given the assumptions that the messages $W_{i,t}$ (as well as the answers $A_t$) are independent over time and the privacy status $F_{[t]}$ are known by both the user and the server, we suppose that $\phi_t$ is a probabilistic function of the user's known requests\footnote{
One may also take all previous queries $Q_{[t-1]}$ as variables of the function. However, since $Q_{[t-1]}$ is also a probabilistic function of $X_{[t+\omega]}$, the variables of the function can be written as in \eqref{eq:query-encoding-function}.} 
$X_{[t+\omega]}$ for some $\omega\in\mathbb{N}^{+}$, \Ie 
\begin{equation}
\label{eq:query-encoding-function}
	Q_t = \phi_t \left( X_{[t+\omega]},\mathsf{K} \right),
\end{equation}
where $\mathsf{K}$ is the random key to generate a probabilistic query.

\noindent \textbf{Answer encoding function:}	
Accordingly, the answer $A_t$ from the server is given by the answer encoding function $\rho_t$, which 
is assumed to be a deterministic function of the query $Q_t$ and the latest messages, \Ie 
\begin{equation}
 	A_t = \rho_t \left(Q_t,W_{1},\ldots, W_{n}  \right).
\end{equation} 
In particular, the length of answer $A_t$ is assumed to be a function of the query $Q_t$, and we denote this length by $\ell(Q_t)$. Then, the average length of the answer $A_t$ is given by 
\begin{equation}
  \ell_{t} = \mathbb{E}_{Q_{t}} [\ell\left(Q_t\right)],
\end{equation}
where $\mathbb{E[\cdot]}$ is the expectation operator.

After receiving the answer $A_t$, the user should be able to recover the desired message from the answer with zero-error probability. This is referred to as the \emph{decodability} condition.

\subsection{Adversary Model}

The adversary is the untrusted server and is assumed to have full statistical knowledge of user's requests and the querying mechanism, that is, the Markov chain's transition probabilities modeling the user's requests and the querying mechanism that generates the queries for information retrieval, respectively. 

We assume that the server has no memory constraint, so the server can use all the queries it received up to time $t$, represented by $Q_{[t]}$, and the statistical knowledge of user's requests and the querying mechanism, to infer user's private requests, \Ie all previous requests of which privacy was ON and all future requests. We also assume that the adversary has unbounded computational power and can launch any attack to infer any of the user's private requests.

Privacy is quantified by the mutual information between the user's private requests and the queries released to the server. It is worth noting that the information-theoretic privacy measure is preferable in this paper, since it is independent of specific attacking strategies. 
We consider the most stringent privacy constraint, namely information-theoretic perfect privacy, which requires that absolutely zero information, measured by the mutual information, about the user's private requests is leaked to the server. Formally, it can be written as 
\begin{equation}
	\label{eq:privacy}
		I \left(X_{\cB_t};Q_{[t]}\right) = 0, ~ \forall t\in \mathbb{N},
\end{equation}
where $\cB_t := \{i:  i \leq t, F_i=\text{ON}\} \cup \{i: i \geq t+1\}$ denotes the user's private requests, \Ie all previous requests of which privacy was ON and all future requests, and $I(\cdot)$ denotes the mutual information. We refer to \eqref{eq:privacy} as the \emph{privacy} condition.


\begin{remark}
The privacy requirement in \eqref{eq:privacy} implies that at time $t$, only the previous privacy status $\{F_i : i \leq t\}$ is known, and the user may not know whether he will choose privacy to be ON or OFF in the future. For this reason, we have adopted a worst-case formulation in the privacy constraint by assuming that privacy is always ON in the future. In other words, at time $t$, all previous requests when privacy was ON, as well as all future requests need to be protected. This is characterized by the set $\cB_t:= \{i:  i \leq t, F_i=\text{ON}\} \cup \{i: i \geq t+1\}$ in \eqref{eq:privacy}.
\end{remark}

For large messages, the upload cost is negligible relative to the download cost, so in this paper, we are interested in minimizing the download cost of the answer at each time, \Ie the average length $\ell_t$ at time $t$. By convention, we measure the efficiency by the download rate $R_t= L/\ell_t$, and define the achievable rate region as follows.
\begin{definition}[Achievable Rate]
The rate tuple $\left(R_t:t \in \mathbb{N}\right)$ is achievable if there exists a scheme with average download cost $\ell_t$ 
such that $R_t \leq L /\ell_t$.
\end{definition}

In the rest of this paper, we will study the achievable region of $\left(R_t:t \in \mathbb{N}\right)$. In particular, the focus of this paper is the characterization of $R_t$ for each $t \in \mathbb{N}$.

\section{Main Result}
\label{section:result}
Before stating the main result, we introduce some necessary notation. Let $\tau(t)$ be the last time privacy was ON, \Ie 
\begin{equation}
\label{eq:def-tau}
	\tau(t):= \max\{i: i \leq t, F_i = \text{ON}\}.
\end{equation}
Without loss of generality, we assume that $F_0 = \text{ON}$, so $\tau(t)$ is always well-defined.
Also, when the time index $t$ is clear from context, we drop $t$ from the notation and write $\tau(t)$ as $\tau$ for simplicity. For our analysis, it is convenient to define 
\begin{equation}
\label{eq:def-u}
	U_t := \left(X_{\tau},X_{t+1}\right),
\end{equation}
which represents the last request when privacy was ON and the next request of the user at time $t$, so the alphabet size of $U_t$ is $\cN^2$. 

Stating our main results calls for the following notation, which we summarize in Figure \ref{fig:matrix_variables}. The inherent value of this notation will be apparent when we give the proofs of our main results in later sections. 
For any given $x \in \cN$, we can order the likelihood probabilities $\pb{X_t=x|U_t=u_{x,i}}$ such that
\begin{multline}
\label{eq:def-order}
 \pb{X_t=x|U_t = u_{x,1}} \leq \pb{X_t=x|U_t = u_{x,2}} \\
\leq \cdots \leq \pb{X_t=x|U_t = u_{x,m}},
\end{multline}
where $m=n^2$ and $u_{x,i}$ for $i=1,\ldots,m$ are distinct elements in $\cN^2$. Note that probabilities $\pb{X_t=x|U_t=u}$ for $x \in \cN$ and $u \in \cN^2$ can be determined by the given Markov chain. 
These ordered probabilities can be stored in the columns of a matrix, as shown in Figure~\ref{fig:matrix_variables}.
Then, for $x \in \cN$ and $i=1,\ldots,m$, let $\lambda_i(t)$ be the summation of row $i$ of this matrix, more formally,
\begin{equation}
    \label{eq:def-lambda}
	\lambda_{i}\left(t\right) =  
	 \sum_{x \in \cN} \pb{X_t=x|U_t=u_{x,i}}.
\end{equation}

\begin{figure}
 \centering
 \scalebox{0.88}{
     \begin{tikzpicture}
  
\matrix[ampersand replacement=\&] {
        \node (tab) {
        \renewcommand{\arraystretch}{1.5}
            \begin{tabular}{lccccc}
           &$1$&$2$&$\cdots$&$n$&\\
 $u_{x,1}$&$p(1|u_{1,1})$ & $p(2|u_{2,1})$  & $\cdots$ & $p(n|u_{n,1})$ \\
$u_{x,2}$&$p(1|u_{1,2})$ & $p(2|u_{2,2})$  & $\cdots$ & $p(n|u_{n,2})$ \\ 
\hspace{10pt}$\vdots$ &$\vdots$&$\ddots $ &  &$\vdots$ \\ 
$u_{x,m}$&$p(1|u_{1,m})$ & $p(2|u_{2,m})$  & $\cdots$ & $p(n|u_{n,m})$
\end{tabular}
        };
        \& 
        \node {}; \\      
        \\ 
};

\draw[- , thick, gray] (-2.9,-1.7) -- (-2.9,1);
\draw[- , thick, gray] (-2.9,-1.7) -- (-2.7,-1.7);
\draw[- , thick, gray] (-2.9,1) -- (-2.7,1);

\draw[- , thick, gray] (3.4,-1.7) -- (3.4,1);
\draw[- , thick, gray] (3.4,-1.7) -- (3.2,-1.7);
\draw[- , thick, gray] (3.4,1) -- (3.2,1);
\end{tikzpicture}

    }
    \caption{For a given $x$, sort the probabilities $\pb{X_t=x|U_t=u}$ for $u\in\cN^2$ in an ascending order, and store the values in column $x$ where $m=n^2$. $\lambda_i$ is the sum of the $i^{th}$ row.}
    \label{fig:matrix_variables}
\end{figure}

Also, for $i=1,\ldots,n$, let
\begin{equation}
\label{eq:def-theta}
	\theta_{i}(t) = 
\begin{cases}
	\lambda_i(t)- \lambda_{i-1}(t), & i < n, \\
	1 - \lambda_{i-1}(t),			& i = n, 
\end{cases}
\end{equation}
where $\lambda_{0}\left(t\right)$ is assumed to be $0$. 
For notational simplicity, let 
\begin{equation}
\label{eq:notation_inner}
	\frac{1}{R_t^{I}}:=  \sum_{i=1}^{n} i \, \theta_i(t),
\end{equation}
and
\begin{equation}
\label{eq:notation_outer}
	\frac{1}{R_t^{O}}:= \lambda_m(t) = \sum_{x\in\cN} \max_{u\in\cN^2} p(X_t=x|U_t=u), 
\end{equation}
where $\lambda_m(t)$ is defined in \eqref{eq:def-lambda}. We may drop the time index $t$ when it is clear from context, that is, we will write $\lambda_i\left(t\right)$ and $\theta_i\left(t\right)$ as $\lambda_i$ and $\theta_i$, respectively. 
With this notation, we are ready to state the main theorem.
\begin{theorem}
\label{theorem:main}
Suppose that $\left\{X_t: t \in \mathbb{N}\right\}$ is a Markov process with the transition matrix $P$. The rate tuple $\left(R_t:t \in \mathbb{N}\right)$ is achievable if 
\begin{equation}
\label{eq:thm-inner}
	\frac{1}{R_t} \geq \frac{1}{R_t^{I}},
\end{equation}
where $1/R_t^{I}$ is defined in \eqref{eq:notation_inner}. On the other hand, any achievable rate tuple  $\left(R_t:t \in \mathbb{N}\right)$ must satisfy 
\begin{equation}
\label{eq:thm-outer}
	\frac{1}{R_t} \geq \frac{1}{R_t^{O}},
\end{equation}
where $1/R_t^{O}$ is defined in \eqref{eq:notation_outer}.
\end{theorem}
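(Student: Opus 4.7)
The proof decomposes into a converse establishing $1/R_t\ge 1/R_t^O$ and an achievability producing a scheme of rate $R_t^I$.

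For the converse, the first step is to isolate the consequence of the privacy constraint \eqref{eq:privacy} that matters at time $t$. Because both $\tau(t)\in\cB_t$ and $t+1\in\cB_t$, the pair $U_t=(X_{\tau(t)},X_{t+1})$ is a coordinate projection of $X_{\cB_t}$, so $I(U_t;Q_{[t]})=0$; equivalently $p(q\mid u)=p(q)$ for every $(q,u)$ in the support. The second step converts decodability into an entropy bound on the answer: given $Q_t=q$, for each $x$ in the conditional support of $X_t\mid Q_t=q$ the user decodes $W_x$ from $(A_t,q)$, so each such $W_x$ is a deterministic function of $A_t$ once $q$ is fixed. Since the $N(q):=|\mathrm{supp}(X_t\mid Q_t=q)|$ messages involved are independent of $Q_t$ and each of length $L$, this yields $H(A_t\mid Q_t=q)\ge L\cdot N(q)$, whence $\ell_t\ge L\,\mathbb E[N(Q_t)]$. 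The third step is the inequality $\mathbb E[N(Q_t)]\ge \sum_x\max_u p(X_t=x\mid U_t=u)$. For any fixed $u$, privacy gives $p(q\mid u)=p(q)$, and bounding $p(x\mid q,u)\le \mathbbm{1}[p(x\mid q,u)>0]\le \mathbbm{1}[p(x\mid q)>0]$ yields
\begin{equation*}
 p(X_t=x\mid U_t=u)=\sum_{q}p(q\mid u)\,p(x\mid q,u)\le \sum_{q}p(q)\,\mathbbm{1}[p(x\mid q)>0].
\end{equation*}
Taking the maximum over $u$ and summing over $x$ delivers \eqref{eq:thm-outer}.

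For the achievability, the plan is to exhibit a scheme whose answer transmits a random subset $S\subseteq\cN$ of messages containing $X_t$, with the distribution of $S$ depending only on the user's randomness (not on $U_t$). In such a scheme the privacy constraint is automatic and the download cost is $L\cdot\mathbb E|S|$. The scheme is organised into $n$ \emph{levels}: level $i\in\{1,\ldots,n\}$ emits $i$-element subsets with total marginal probability $\theta_i$. Level $1$ is immediate: the singleton $\{x\}$ is drawn with marginal probability $p(X_t=x\mid U_t=u_{x,1})=\min_u p(X_t=x\mid U_t=u)$, which is realisable since this minimum never exceeds $p(X_t=x\mid U_t=u)$ for any $u$, so the pointwise conditional probabilities $p(\text{singleton }\{x\}\mid X_t=x,U_t=u)$ are in $[0,1]$. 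Higher levels are designed to absorb the residual mass $\theta_i=\lambda_i-\lambda_{i-1}$ dictated by the per-column ordering in Fig.~\ref{fig:matrix_variables}, and the expected answer length is $L\sum_{i=1}^n i\,\theta_i=L/R_t^I$.

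The converse is a short entropy-plus-privacy computation, so the main obstacle is the achievability: building the level-$i$ sub-schemes for $i\ge 2$ and verifying that they simultaneously satisfy the $(x,u)$-conditional correctness and the $u$-marginal privacy constraints. For $n=2$ only singletons and the full set $\cN$ appear and the construction is forced; for general $n$ one must allocate mass across $\binom{n}{i}$ subsets per level while respecting $n^2$ privacy equalities and $n\cdot n^2$ correctness equalities. I expect to discharge this by a layer-by-layer greedy procedure driven by the per-column ordering: at layer $i$, after peeling off the contributions from layers $1,\ldots,i-1$, the residual mass associated with $(x,u)$ whose rank satisfies $k_x(u)\ge i$ is routed onto size-$i$ subsets in polynomial time, with the telescoping identity $\theta_i=\lambda_i-\lambda_{i-1}$ ensuring that the totals match. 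Showing that this algorithm terminates with the advertised marginals is the crux of Theorem~\ref{theorem:main}.
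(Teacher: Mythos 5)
Your converse is correct and, although phrased a bit differently, is the same argument as the paper's. You replace the paper's auxiliary variable $Y_t$ (which classifies queries by the set of decodable messages) by $N(q)=|\mathrm{supp}(X_t\mid Q_t=q)|$; the chain ``privacy $\Rightarrow p(q\mid u)=p(q)$, decodability $\Rightarrow \ell(q)\ge L\,N(q)$, then $\sum_x\max_u p(x\mid u)\le\mathbb{E}[N(Q_t)]$'' is exactly the content of the paper's Lemma~\ref{lemma:outer} and its surrounding reduction, and your indicator-function step $p(x\mid q,u)\le\mathbbm{1}[p(x\mid q)>0]$ reproduces the chain of inequalities in Appendix~\ref{Appendix:lemma-outer}. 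This half of the theorem is fine.

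The achievability half has a genuine gap, and you acknowledge it yourself. You set up the right target (a subset-valued query $Z\supseteq\{X_t\}$ whose marginal is independent of $U_t$, organized into cardinality levels with masses $\theta_1,\ldots,\theta_n$ and expected size $\sum_i i\,\theta_i$), and you correctly handle level~$1$ by assigning $\{x\}$ the mass $\min_u p(x\mid u)=\lambda_{x,1}$. But for levels $i\ge 2$ you only assert that ``a layer-by-layer greedy procedure driven by the per-column ordering'' should work and that ``showing that this algorithm terminates with the advertised marginals is the crux.'' That is precisely the part the paper spends Section~\ref{section:algorithm_achievable} plus Propositions~\ref{prop:alg-existence}--\ref{proposition:alg-performance} and their appendix proofs on: one has to (i) maintain a residual matrix $M_{u,x}$ whose rows are peeled to build the size-$\ell$ sets, (ii) prove the feasibility inequality \eqref{eq:alg-existence} so the greedy extraction never runs out of budget (this reduces to $\theta_n\ge 0$, Proposition~\ref{proposition:lamma-less-one}, which itself needs an argument), (iii) convert the per-row extractions into joint tuples $z$ via the buffer/boundary pairing of Figure~\ref{fig:alg-boundary}, and (iv) verify that the accumulated assignment is a valid conditional distribution, independent of $u$, with $\Pr(|Z|=\ell)=\theta_\ell$. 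None of this is carried out, and it is not forced the way the $n=2$ case is.

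There is also a subtler missing step you dismiss too quickly: you claim that if the query distribution depends only on $(X_t,U_t)$ with $Q_t\perp U_t$ then ``the privacy constraint is automatic.'' It is not: the requirement is $I(X_{\cB_t};Q_{[t]})=0$, which involves the whole past query history and the whole set of protected requests, not just $U_t$ and $Q_t$. Bridging from the single-time condition $Q_t\perp U_t$ to the cumulative condition $Q_{[t]}\perp X_{\cB_t}$ is the paper's Lemma~\ref{Lemma:privacy}, proved by an induction over $t$ that exploits the Markov structure of $\{X_t\}$ and the fact that $Q_t$ is a stochastic function of $(X_t,U_t)$ alone. Without this lemma the achievability claim does not close even if the layered construction were fully specified.

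In short: converse, complete and essentially identical to the paper; achievability, a correct plan but with the two central pieces (the explicit multi-level construction and the privacy-lifting induction) still unproved.
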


\begin{remark}   [Single server PIR]
As mentioned earlier, the single server private information retrieval problem can be viewed as a special case of this setting where privacy is always ON. As a sanity check, if $F_t = \text{ON}$ for all $t\in\mathbb{N}$, we have $U_t=\left(X_t,X_{t+1}\right)$ by the definition \eqref{eq:def-u}, and then we can easily see that $ \max_{u_t} \pb{x_t|u_{t}} = 1$, for all  $x_t \in \cN$. Thus, we know from Theorem~\ref{theorem:main} that $R_t$ is achievable only if
\begin{equation*}
	R_t \leq R_t^{O} = \frac{1}{\lambda_m(t)} = \frac{1}{n},
\end{equation*}
which implies that it is necessary to download all messages
when the privacy is ON. This is consistent with the well- known result in the literature on PIR\cite{Chor_1995}.
\end{remark}

The rest of the paper is dedicated to proving Theorem~\ref{theorem:main}. 
In particular, we propose a polynomial-time querying scheme that achieves $R_t^{I}$ in sections \ref{section:scheme} and \ref{section:algorithm_achievable}. As discussed in the previous remark, the user has to query for all the messages when privacy is ON, so our focus will be on the instances when privacy is OFF. Roughly speaking, in our proposed probabilistic querying scheme, the user asks for a subset of the messages containing the message in which he is interested. The user generates his query based on his knowledge of his previous requests when privacy was ON, his current request, and his next request.   
Moreover, the proof of the upper bound $R_t^{O}$ will be presented in Section~\ref{section:converse}.

\section{Optimality for Special Families of Markov Chains}
\label{sec:Special_Markov}

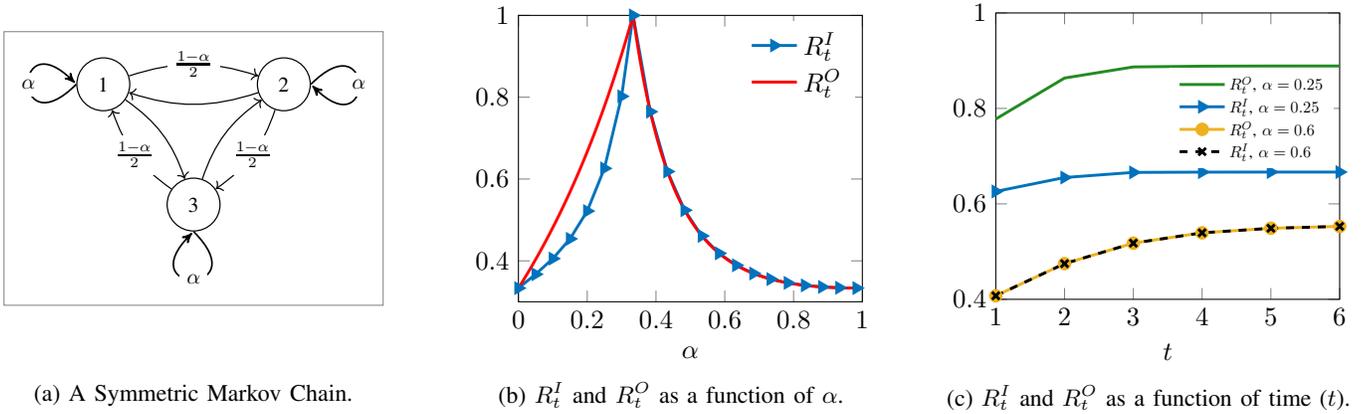
\begin{figure*}[t]
     \centering
     \begin{subfigure}[l]{0.3\textwidth}
         \centering
          \scalebox{0.8}{
                  \begin{tikzpicture}[thick,scale=1, every node/.style={scale=1}]
 \tikzstyle{every node}=[font=\normalsize]
        \tikzset{node style/.style={state, 
                      inner sep=0.5pt,
                                    minimum size = 25pt,
                                    line width=0.2mm,
                                    fill=white},
                    LabelStyle/.style = { minimum width = 1em, fill = white!10,
                                            text = black},
                   EdgeStyle/.append style = {->, bend left=18, line width=0.2mm} }
        
        \node at (6.5,9) [rectangle,minimum width=1cm, minimum height=4.8cm] {};
        
        \node at (6.5,8.6) [line width=0.1pt, rectangle,draw,minimum width=6.3cm, minimum height=4.55cm,color=gray] {};

        \node[node style] at (5,10)     (1)     {1};
        \node[node style] at (8, 10)     (2)     {2};
    \node[node style] at (6.5, 8)     (3)     {3};

            \Edge[label=$\frac{1-\alpha}{2}$](1)(2)
        \Edge (2)(1)
        \Edge (1)(3)
        \Edge[label=$\frac{1-\alpha}{2}$](3)(1)
        \Edge (3)(2)
        \Edge[label=$\frac{1-\alpha}{2}$](2)(3)
            
            \Loop[dist = 1.5cm, dir = NO, label = $\alpha$](1.west)
            \Loop[dist = 1.5cm, dir = SO, label = $\alpha$](2.east)           
            \Loop[dist = 1.5cm, dir = WE, label = $\alpha$](3.south)     
            
            \node at (6.5,6.8) [rectangle,minimum width=1cm, minimum height=3cm] {};
         
    \end{tikzpicture}
         }
         \caption{\centering A Symmetric Markov Chain.}
         \label{fig:example_3states_chain}
     \end{subfigure}
     \hfill
     \begin{subfigure}[l]{0.3\textwidth}
         \centering
         
%
%
\begin{tikzpicture}

\begin{axis}[%
width=1.8in,
height=1.5in,
scale only axis,
xmin=0,
xmax=1,
ymin=0.3,
ymax=1,
xlabel=$\alpha$,
axis background/.style={fill=white},
legend style={legend cell align=left, align=left, draw=none}
]
\addplot [color=lightblue, line width=1.1pt, mark=triangle*, mark options={solid, rotate=270, fill=lightblue, lightblue}]
  table[row sep=crcr]{%
0 0.333333333333333\\
0.05  0.367295763494787\\
0.1 0.405486659150695\\
0.15  0.453841056226742\\
0.2 0.521739130434783\\
0.25  0.626016260162602\\
0.3 0.802142407057341\\
0.333333333333333 1\\
0.383333333333333 0.764650283553875\\
0.433333333333333 0.618343195266272\\
0.483333333333333 0.523781212841855\\
0.533333333333333 0.4609375\\
0.583333333333333 0.418367346938776\\
0.633333333333333 0.389196675900277\\
0.683333333333333 0.369125520523498\\
0.733333333333333 0.355371900826446\\
0.783333333333333 0.346084200995926\\
0.833333333333333 0.34\\
0.883333333333333 0.33624065503738\\
0.933333333333333 0.334183673469388\\
0.983333333333333 0.333381212295317\\
};
\addlegendentry{$ R_t^I$}

\addplot [color=red,line width=1.1pt]
  table[row sep=crcr]{%
0 0.333333333333333\\
0.01  0.346801346801347\\
0.02  0.360544217687075\\
0.03  0.374570446735395\\
0.04  0.388888888888889\\
0.05  0.403508771929825\\
0.06  0.418439716312057\\
0.07  0.433691756272401\\
0.08  0.449275362318841\\
0.09  0.465201465201465\\
0.1 0.481481481481482\\
0.11  0.49812734082397\\
0.12  0.515151515151515\\
0.13  0.532567049808429\\
0.14  0.550387596899225\\
0.15  0.568627450980392\\
0.16  0.587301587301587\\
0.17  0.606425702811245\\
0.18  0.626016260162602\\
0.19  0.646090534979424\\
0.2 0.666666666666667\\
0.21  0.687763713080169\\
0.22  0.70940170940171\\
0.23  0.731601731601732\\
0.24  0.754385964912281\\
0.25  0.777777777777778\\
0.26  0.801801801801802\\
0.27  0.82648401826484\\
0.28  0.851851851851852\\
0.29  0.87793427230047\\
0.3 0.904761904761905\\
0.31  0.932367149758454\\
0.32  0.96078431372549\\
0.33  0.990049751243781\\
0.333333333333333 1\\
0.343333333333333 0.943020077292865\\
0.353333333333333 0.891598433606266\\
0.363333333333333 0.84508879723929\\
0.373333333333333 0.802933673469388\\
0.383333333333333 0.764650283553875\\
0.393333333333333 0.7298190175237\\
0.403333333333333 0.69807390205587\\
0.413333333333333 0.669094693028096\\
0.423333333333333 0.64260028520057\\
0.433333333333333 0.618343195266272\\
0.443333333333333 0.596104923964045\\
0.453333333333333 0.575692041522491\\
0.463333333333333 0.556932870969411\\
0.473333333333333 0.539674667724658\\
0.483333333333333 0.523781212841855\\
0.493333333333333 0.509130752373996\\
0.503333333333333 0.495614227446165\\
0.513333333333333 0.483133749367516\\
0.523333333333333 0.47160128199927\\
0.533333333333333 0.4609375\\
0.543333333333333 0.451070796793255\\
0.553333333333333 0.441936420380316\\
0.563333333333333 0.433475718637303\\
0.573333333333333 0.425635478637101\\
0.583333333333333 0.418367346938776\\
0.593333333333333 0.411627319782856\\
0.603333333333333 0.405375293794451\\
0.613333333333333 0.399574669187146\\
0.623333333333333 0.394191998627356\\
0.633333333333333 0.389196675900277\\
0.643333333333333 0.38456065934656\\
0.653333333333333 0.380258225739275\\
0.663333333333333 0.376265750864877\\
0.673333333333333 0.3725615135771\\
0.683333333333333 0.369125520523498\\
0.693333333333333 0.365939349112426\\
0.703333333333333 0.362986006603625\\
0.713333333333333 0.360249803476286\\
0.723333333333333 0.357716239461445\\
0.733333333333333 0.355371900826446\\
0.743333333333333 0.353204367672787\\
0.753333333333333 0.351202130158979\\
0.763333333333333 0.349354512690452\\
0.773333333333333 0.347651605231867\\
0.783333333333333 0.346084200995926\\
0.793333333333333 0.344643739848881\\
0.803333333333333 0.343322256848195\\
0.813333333333333 0.342112335393711\\
0.823333333333333 0.341007064531463\\
0.833333333333333 0.34\\
0.843333333333333 0.339085128653783\\
0.853333333333333 0.3382568359375\\
0.863333333333333 0.337509876119915\\
0.873333333333333 0.336839345026514\\
0.883333333333333 0.33624065503738\\
0.893333333333333 0.335709512140789\\
0.903333333333333 0.335241894854373\\
0.913333333333333 0.334834034844691\\
0.923333333333333 0.334482399092911\\
0.933333333333333 0.334183673469388\\
0.943333333333333 0.333934747593302\\
0.953333333333333 0.333732700865568\\
0.963333333333333 0.33357478957388\\
0.973333333333333 0.33345843497842\\
0.983333333333333 0.333381212295317\\
0.993333333333333 0.33334084050268\\
};
\addlegendentry{$R_t^O$}

\end{axis}
\end{tikzpicture}%
         
         \caption{\centering $R_t^I$ and $R_t^O$ as a function of $\alpha$.}
         \label{fig:example_3states_p}
     \end{subfigure}
     \hfill
      \begin{subfigure}[l]{0.3\textwidth}
         \centering
%
%
\begin{tikzpicture}

\begin{axis}[%
width=1.8in,
height=1.5in,
scale only axis,
xmin=1,
xmax=6,
ymin=0.4,
ymax=1,
xlabel=$t$,
ytick={0.4, 0.6, 0.8, 1},
xtick={1, 2, 3,4,5, 6},
axis background/.style={fill=white},
legend style={at={(0.985,0.81)},legend cell align=left, align=left, draw=none ,nodes={scale=0.61, transform shape}}
]

\addplot [color=darkgreen, line width=1.1pt]
  table[row sep=crcr]{%
1 0.777777777777778\\
2 0.863636363636364\\
3 0.886939571150097\\
4 0.888482186432406\\
5 0.888858372242058\\
6 0.888882531108619\\
};
\addlegendentry{$R_t^O$, $\alpha=0.25$}

\addplot [color=lightblue, line width=1.1pt,mark=triangle*, mark options={solid, rotate=270, fill=lightblue, lightblue}]
  table[row sep=crcr]{%
1 0.626016260162602\\
2 0.655172413793103\\
3 0.665935940376163\\
4 0.666483617060223\\
5 0.666655222989587\\
6 0.66666380565736\\
};
\addlegendentry{ $R_t^I$, $\alpha=0.25$}

\addplot [color=mycolor3, line width=1.1pt,mark=*, mark options={solid, rotate=270, fill=mycolor3, mycolor3}]
  table[row sep=crcr]{%
1 0.407407407407407\\
2 0.474747474747475\\
3 0.517730496453901\\
4 0.539320142059868\\
5 0.548865893174454\\
6 0.552847076747286\\
};
\addlegendentry{$R_t^O$, $\alpha=0.6$}

\addplot [color=black, line width=1.1pt,mark=x, dashed, mark options={solid, rotate=270, fill=black, black}]
  table[row sep=crcr]{%
1 0.407407407407407\\
2 0.474747474747475\\
3 0.517730496453901\\
4 0.539320142059868\\
5 0.548865893174454\\
6 0.552847076747286\\
};
\addlegendentry{$R_t^I$, $\alpha=0.6$}

\end{axis}

\end{tikzpicture}%

         \caption{$R_t^I$ and $R_t^O$ as a function of time ($t$).}
         \label{fig:example_3states_t}
     \end{subfigure}
     
        \caption{In Figure~\ref{fig:example_3states_chain}, we graphically represent the 3-state symmetric Markov chain used in Example~\ref{ex:symmetric}, where $0 \leq \alpha\leq 1$. In Figure~\ref{fig:example_3states_p}, we plot the achievable rate $R_t^I$(c.f.\eqref{eq:thm-inner}) and the upper bound $R_t^O$(c.f.\eqref{eq:thm-outer}),
         as a function of $\alpha$, when $\tau=0$ and $t=1$.
         In Figure \ref{fig:example_3states_t}, we plot
         $R_t^I$ and $R_t^O$
         as a function of time $t$ for both
         $\alpha=0.25$ and $\alpha=0.6$.
         }
        \label{fig:example_3states}
\end{figure*}
 
Before we proceed to prove Theorem~\ref{theorem:main}, we give two corollaries that characterize two special classes of Markov chains for which the bounds in Theorem~\ref{theorem:main} are tight, \Ie $R_t^{I} = R_t^{O}$, which means that our proposed scheme is optimal for these special cases.

\begin{corollary}[Optimality for $n=2$]
\label{corollary:two}
For the case $n=2$, the two bounds  \eqref{eq:thm-inner} and \eqref{eq:thm-outer} match, \Ie 
\begin{equation}
 	\frac{1}{R_t^{I}} = \frac{1}{R_t^{O}} = \lambda_m(t).
\end{equation} 
In other words, the rate tuple $\left(R_t:t \in \mathbb{N}\right)$ is achievable if and only if
\begin{equation}
	 R_t \leq \frac{1}{\lambda_m(t)}.
\end{equation} 
\end{corollary}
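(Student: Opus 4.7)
The plan is to specialize the general definitions to $n=2$ and show that both $1/R_t^{I}$ and $1/R_t^{O}$ collapse to $\lambda_m(t)$. First I substitute $n=2$ in \eqref{eq:notation_inner}: the sum runs only over $i=1,2$, so $1/R_t^{I} = \theta_1(t) + 2\theta_2(t)$. Using \eqref{eq:def-theta} with $\lambda_0(t)=0$ gives $\theta_1(t) = \lambda_1(t)$ and $\theta_2(t) = 1 - \lambda_1(t)$, hence
\begin{equation*}
\frac{1}{R_t^{I}} \;=\; \lambda_1(t) + 2\bigl(1-\lambda_1(t)\bigr) \;=\; 2 - \lambda_1(t).
\end{equation*}
Since $1/R_t^{O} = \lambda_m(t)$ by definition, the corollary reduces to the single identity $\lambda_1(t) + \lambda_m(t) = 2$.

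The key observation is binary complementarity: when $n=2$, for every $u\in\cN^2$ we have $\pb{X_t=1\mid U_t=u} + \pb{X_t=2\mid U_t=u} = 1$. Consequently, the two columns of the ordering matrix in Figure~\ref{fig:matrix_variables} are exact reverses of each other, because ordering $\pb{X_t=1\mid U_t=\cdot}$ in ascending order is the same as ordering $\pb{X_t=2\mid U_t=\cdot}$ in descending order. In the notation of \eqref{eq:def-order}, this gives the pairing $u_{1,1}=u_{2,m}$ and $u_{1,m}=u_{2,1}$.

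Plugging these identifications into the definition \eqref{eq:def-lambda} of $\lambda_1(t)$ and $\lambda_m(t)$ and applying complementarity to the $x=2$ terms yields
\begin{align*}
\lambda_1(t) &= \pb{X_t=1\mid U_t=u_{1,1}} + 1 - \pb{X_t=1\mid U_t=u_{1,m}}, \\
\lambda_m(t) &= \pb{X_t=1\mid U_t=u_{1,m}} + 1 - \pb{X_t=1\mid U_t=u_{1,1}}.
\end{align*}
Adding these equations, the four probability terms cancel pairwise and leave $\lambda_1(t)+\lambda_m(t)=2$. Combining with $1/R_t^{I} = 2-\lambda_1(t)$ produces $1/R_t^{I} = \lambda_m(t) = 1/R_t^{O}$, and Theorem~\ref{theorem:main} then supplies both achievability and the matching converse at rate $1/\lambda_m(t)$. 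I do not anticipate any real obstacle: the only substantive ingredient is the reversal of the two sorted columns forced by binary complementarity, and the rest is bookkeeping; in particular, the argument is insensitive to how ties in the ordering \eqref{eq:def-order} are broken, since $\lambda_1(t)$ and $\lambda_m(t)$ are uniquely determined as $\sum_x \min_u \pb{X_t=x\mid U_t=u}$ and $\sum_x \max_u \pb{X_t=x\mid U_t=u}$, respectively.
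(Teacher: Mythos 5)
Your proof is correct and follows essentially the same route as the paper's: compute $1/R_t^{I} = \theta_1 + 2\theta_2 = 2-\lambda_1(t)$, then use binary complementarity $\pb{X_t=1\mid U_t=u} + \pb{X_t=2\mid U_t=u}=1$ to show $\lambda_1(t)+\lambda_m(t)=2$. The only cosmetic difference is that the paper expresses the complementarity step directly as $\min_{u}\pb{x\mid u} = 1-\max_{u}\pb{\bar{x}\mid u}$ rather than via explicit column-reversal of the $u_{x,i}$ indices, which sidesteps the tie-breaking concern you (correctly) flag and resolve at the end.
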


\begin{definition}[Symmetric Markov Chain]
\label{def:sym_markov_chain}
A Markov chain is symmetric if its transition matrix $P$ is given by
\begin{equation}
    P_{i,j} =
    \begin{cases}
        \alpha, & \quad \text{if} \quad i=j,\\
        \frac{1-\alpha}{n-1}, & \quad \text{if} \quad i\neq j,
    \end{cases}
\end{equation}
where $0 \leq \alpha \leq 1$ and $P_{i,j}$ denotes 
the transition probability from state $i$ to state $j$. 
\end{definition}

\begin{corollary}[Optimality for Symmetric Markov Chain]
\label{cor:sym_markov}
For the symmetric Markov chain such that $\frac{1}{n} \leq \alpha \leq 1$, 
the two bounds \eqref{eq:thm-inner} and \eqref{eq:thm-outer} match. In particular,
\begin{equation}
\label{eq:cor_sym_markov}
\frac{1}{R_t^{I}} = \frac{1}{R_t^{O}}=\alpha n\frac{ (n-1)^{t-\tau}+ (n-1)(n\alpha-1)^\tau }{(n-1)^{t-\tau}+(n\alpha-1)^{t-\tau+1}}.
\end{equation}
In other words, the rate tuple $\left(R_t:t \in \mathbb{N}\right)$ is achievable if and only if
\begin{equation}
	\frac{1}{R_t} \geq \alpha n\frac{ (n-1)^{t-\tau}+ (n-1)(n\alpha-1)^\tau }{(n-1)^{t-\tau}+(n\alpha-1)^{t-\tau+1}}.
\end{equation}
\end{corollary}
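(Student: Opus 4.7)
My plan is to exploit the full symmetry of the symmetric Markov chain to compute all relevant conditional probabilities in closed form, and then verify that $R_t^{I}$ and $R_t^{O}$ collapse to the same expression.

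First, I would diagonalize the transition matrix $P$, whose eigenvalues are $1$ and $\beta := (n\alpha-1)/(n-1)$ (with multiplicity $n-1$), to obtain the standard closed form
\begin{equation*}
P^{(k)}_{i,i} = \frac{1 + (n-1)\beta^{k}}{n}, \qquad P^{(k)}_{i,j} = \frac{1 - \beta^{k}}{n}, \quad i\neq j.
\end{equation*}
Note that $\alpha \geq 1/n$ gives $\beta \in [0,1]$. Setting $k := t - \tau$ and applying Bayes' rule together with the Markov property, the posterior
\begin{equation*}
p(X_t = x \mid U_t = (a,b)) = \frac{P^{(k)}_{a,x}\, P_{x,b}}{P^{(k+1)}_{a,b}}
\end{equation*}
takes only five distinct values $p_A, p_B, p_C, p_D, p_E$, depending on the pattern of equalities among $a,b,x$: $a=b=x$, $a=x\neq b$, $b=x\neq a$, $a=b\neq x$, and $a,b,x$ pairwise distinct. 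By symmetry, every column of the matrix in Figure~\ref{fig:matrix_variables} contains the same multiset of values with multiplicities $1, n-1, n-1, n-1, (n-1)(n-2)$, respectively.

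Second, using $\beta \geq 0$, I would verify by direct comparison that $p_A$ is the column maximum and $p_D$ is the column minimum (each of the four comparisons reduces to a sign check once common positive factors are cleared). This yields $1/R_t^{O} = n\, p_A$ at once. Since $p_D$ is the minimum with multiplicity exactly $n-1$ per column, the first $n-1$ sorted entries of every column equal $p_D$, so $\lambda_j = n\, p_D$ for $j=1,\dots,n-1$. Plugging this into the definition of $\theta_i$ yields $\theta_1 = n\, p_D$, $\theta_i = 0$ for $2 \leq i \leq n-1$, and $\theta_n = 1 - n\, p_D$, whence
\begin{equation*}
\frac{1}{R_t^{I}} = \sum_{i=1}^n i\, \theta_i = n\, p_D + n\bigl(1 - n\, p_D\bigr) = n\bigl(1 - (n-1) p_D\bigr).
\end{equation*}

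Third, the heart of the argument is the algebraic identity $p_A + (n-1) p_D = 1$. Substituting the explicit forms,
\begin{equation*}
p_A + (n-1) p_D = \frac{\alpha\bigl(1 + (n-1)\beta^{k}\bigr) + (1-\alpha)\bigl(1 - \beta^{k}\bigr)}{1 + (n-1)\beta^{k+1}},
\end{equation*}
and the numerator, after collecting the coefficient of $\beta^{k}$, becomes $1 + \beta^{k}(n\alpha - 1) = 1 + (n-1)\beta^{k+1}$, exactly matching the denominator. This identity gives $1/R_t^{I} = n\, p_A = 1/R_t^{O}$. Finally, substituting $\beta = (n\alpha-1)/(n-1)$ into $n\, p_A$ and clearing the nested powers by multiplying the numerator and denominator of the resulting fraction by $(n-1)^{k}$ produces the stated closed form in \eqref{eq:cor_sym_markov}.

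The main obstacle is bookkeeping: one must carefully check the four inequalities establishing $p_D \leq \min\{p_B, p_C, p_E\}$ and $\max\{p_B,p_C,p_E\} \leq p_A$ under the regime $\alpha \in [1/n,1]$, and then engineer the algebraic cancellation in the identity $p_A + (n-1) p_D = 1$, which is the conceptual reason the inner and outer bounds agree for this family of chains.
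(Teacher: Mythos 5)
Your proof is correct and follows essentially the same route as the paper's: compute $P^{t-\tau}$ in closed form, apply Bayes' rule to see that $p(x_t\mid u_t)$ takes five distinct values (your $p_A,\dots,p_E$ are the paper's $\sigma_1,\sigma_2,\sigma_3,\sigma_4,\sigma_5$) with the same multiplicities, verify $p_A$ is the column maximum and $p_D$ the column minimum, and close with the identity $p_A+(n-1)p_D=1$, which is precisely the paper's check that $n\sigma_4+n-n^2\sigma_4=n\sigma_1$. One small observation: your derivation yields $n\alpha\bigl((n-1)^{t-\tau}+(n-1)(n\alpha-1)^{t-\tau}\bigr)\big/\bigl((n-1)^{t-\tau}+(n\alpha-1)^{t-\tau+1}\bigr)$, with exponent $t-\tau$ throughout (consistent with the paper's $\sigma_1$), which indicates that the exponent $\tau$ in the numerator of \eqref{eq:cor_sym_markov} is a typo for $t-\tau$.
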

 
 The proofs of Corollaries~\ref{corollary:two} and \ref{cor:sym_markov}  can be found in Appendix~\ref{Appendix:proof_n=2} and Appendix~\ref{Appendix:proof_sym}, respectively.

\begin{example}
\label{ex:symmetric}
We study a special case described in Corollary~\ref{cor:sym_markov}. Suppose that we are 
given $\tau=0$, and a 3-state Markov chain, as represented in Figure~\ref{fig:example_3states_chain}, where $0 \leq \alpha \leq 1$.

In this case, we have two regimes, one for $\alpha<\frac{1}{3}$ and the other for $\alpha\geq\frac{1}{3}$. This is because the ordering of probabilities (c.f.\eqref{eq:def-order}) changes at $\alpha = \frac{1}{3}$.  

For $\alpha \geq \frac{1}{3}$, the bounds 
\eqref{eq:thm-inner} and \eqref{eq:thm-outer} match, \Eg 
for $t=1$,
\begin{equation}
\label{eq:exp-1}
    \frac{1}{R_1^O}=\frac{1}{R_1^I}=\frac{6\alpha^2}{3\alpha^2-2\alpha+1}.
\end{equation}
However, for $\alpha < \frac{1}{3}$ and $t=1$, we have
\begin{equation}
\label{eq:exp-2}
    \frac{1}{R_1^O}=\frac{3-3\alpha}{3\alpha+1}\leq\frac{1}{R_1^I}=\frac{2}{3 \alpha+1} - \frac{4\alpha-2}{3 \alpha^2-2\alpha+1}-1.
\end{equation}
We illustrate \eqref{eq:exp-1} and \eqref{eq:exp-2} in 
Figure~\ref{fig:example_3states_p}.

In Figure~\ref{fig:example_3states_t}, we analyze the rate over time for $\alpha=0.25$, and $\alpha=0.6$. It is notable that as $t$ grows, the correlation between $X_t$ (the current request) and $X_{\tau}$ (the request when privacy was last ON)  decreases, which leads to an increase in the download rate $R_t$.

\end{example}

\section{Achievability: Linear Programming Formulation}
\label{section:scheme}

Towards finding an ON-OFF privacy scheme, we consider uncoded queries for retrieving messages, \Ie  
the query $Q_t$ at time $t$ takes values in the power set of $\cN$, denoted by $\sP\left(\cN\right)$. In other words, the user will query for a subset of the messages $W_\cN$ at each time. Later in this section, we will see that designing an uncoded query scheme is equivalent to solving a linear programming problem.


Upon receiving the query $Q_t\subseteq\cN$, the server generates a corresponding answer $A_t=W_{Q_t}\subseteq W_\cN$. The length of the answer can be written as
\begin{equation*}
	\ell\left(Q_t\right)=|Q_t|\,L,
\end{equation*}
where $L$ is the length of a message. Therefore, the average length $\ell_t$ is 
\begin{equation}
\label{eq:scheme-download}
	\ell_t = \mathbb{E}\left[|Q_t|\right]\,L.
\end{equation}

Next, we describe how to construct the query $Q_t$ for each time $t$.
The query $Q_t$ is a probabilistic function of the current request
$X_t$ and $U_t=(X_\tau,X_{t+1})$(c.f.\eqref{eq:def-u}). 
Therefore, the encoding of the query $Q_t$ can be equivalently denoted by the probability distribution $\wb{q_t|x_t,u_t}$, where $x_t \in \cN$, $u_t=(x_{\tau},x_{t+1}) \in \cN^2$ and $q_t \in \sP\left(\cN\right)$. In other words, given $x_t \in \cN$ and $u_t=(x_{\tau},x_{t+1}) \in \cN^2$, the user will send $q_t \in \sP\left(\cN\right)$ with probability $\wb{q_t|x_t,u_t}$.

Since $A_t=W_{Q_t}$, if $X_t \in Q_t$, then the retrieved answer contains the desired message $W_{X_t}$. Therefore, if 
\begin{equation}
\label{eq:scheme-decode}
	\pb{q_t,x_t|u_t} = 0,\quad \forall x_t \notin q_t,
\end{equation}
then decodability is guaranteed. Note that $\pb{q_t,x_t|u_t}$ can be written as 
\begin{equation*}
	\pb{q_t,x_t|u_t} = \pb{x_t|u_t} \wb{q_t|x_t,u_t},
\end{equation*}
where $\pb{x_t|u_t}$ is given by the Markov chain, so $\pb{q_t,x_t|u_t}$ is completely determined by $\wb{q_t|x_t,u_t}$.

To guarantee the privacy(c.f.\eqref{eq:privacy}), we introduce the following lemma. It states that if we design the encoding function $\wb{q_t|x_t,u_t}$ such that 
\begin{equation}
\label{eq:scheme-privacy}
	\pb{q_t|u_t} = \pb{q_t},\quad \forall u_t \in \cN^2,  q_t \in \sP\left(\cN\right),
\end{equation}
for all $t \in \mathbb{N}$, then the scheme satisfies the required privacy constraint \eqref{eq:privacy}.
\begin{lemma}
\label{Lemma:privacy}
	If $Q_i$ is a probabilistic function of $U_i$ and $X_i$, and $Q_i$ is independent of $U_i$ for $i =0,1,\ldots,t$, then $Q_{[t]}$ is independent of $X_{\cB_t}$, where $\cB_t = \{i:  i \leq t, F_i=\text{ON}\} \cup \{i: i \geq t+1\}$.
\end{lemma}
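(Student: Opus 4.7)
The plan is to prove Lemma~\ref{Lemma:privacy} by induction on $t$, exploiting two structural facts. First, the Markov property of $\{X_i\}$ together with the fact that each $Q_i$ depends on the $X$-process only through $(X_{\tau(i)}, X_i, X_{i+1})$ and on an independent randomness $K_i$ implies that, conditional on $X_{t+1}$, the tail $\{X_i : i \geq t+2\}$ is independent of $Q_{[t]}$. Second, the hypothesis $Q_i \perp U_i$ gives a marginal independence that will be promoted to a conditional independence given the past queries. These observations motivate introducing the reduced set $V_t := (X_{A_t}, X_{t+1})$, where $A_t := \{i \leq t : F_i = \text{ON}\}$, and proving the strengthened claim $Q_{[t]} \perp V_t$.

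This reduction is justified by the chain-rule identity $I(Q_{[t]}; X_{\cB_t}) = I(Q_{[t]}; V_t) + I(Q_{[t]}; \{X_i : i \geq t+2\} \mid V_t)$, whose second term vanishes by the first structural fact above, since $X_{t+1} \in V_t$. The base case $t=0$ is immediate: because $F_0 = \text{ON}$, we have $V_0 = (X_0, X_1) = U_0$, and the hypothesis $Q_0 \perp U_0$ gives $Q_0 \perp V_0$ directly.

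For the inductive step I split $I(Q_{[t]}; V_t) = I(Q_{[t-1]}; V_t) + I(Q_t; V_t \mid Q_{[t-1]})$. The first term vanishes because $V_t$ differs from $V_{t-1}$ only in the handling of $X_t$ versus $X_{t+1}$, and one can combine the inductive hypothesis $Q_{[t-1]} \perp V_{t-1}$ with $X_{t+1} \perp Q_{[t-1]} \mid X_t$ (a Markov consequence) to conclude. For the second term, when $F_t = \text{ON}$ the set $V_t$ already contains $U_t = (X_t, X_{t+1})$, so given $V_t$ the query $Q_t$ depends only on $U_t$ and fresh randomness, and the hypothesis $Q_t \perp U_t$ yields $P(Q_t \mid V_t, Q_{[t-1]}) = P(Q_t)$. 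When $F_t = \text{OFF}$, I expand $P(Q_t \mid V_t, Q_{[t-1]}) = \sum_{x_t} P(X_t = x_t \mid V_t, Q_{[t-1]}) \, P(Q_t \mid x_t, U_t)$ and aim to prove the key posterior identity $P(X_t \mid V_t, Q_{[t-1]}) = P(X_t \mid U_t)$, after which the sum collapses to $P(Q_t \mid U_t) = P(Q_t)$.

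The main obstacle is establishing this posterior identity. The conditioning set $V_t = (X_{A_{t-1}}, X_{t+1})$ strictly enlarges $U_t = (X_{\tau(t-1)}, X_{t+1})$, and one additionally conditions on the query history $Q_{[t-1]}$. I plan to eliminate the extra variables $X_{A_{t-1} \setminus \{\tau(t-1)\}}$ via the Markov property (conditioning on $X_{\tau(t-1)}$ renders $X_t$ independent of earlier $X$'s), and to eliminate the dependence on $Q_{[t-1]}$ via the inductive hypothesis $Q_{[t-1]} \perp V_{t-1} = (X_{A_{t-1}}, X_t)$, which implies that $Q_{[t-1]}$ and $X_t$ are conditionally independent given $X_{A_{t-1}}$. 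Joining these two reductions yields $P(X_t \mid V_t, Q_{[t-1]}) = P(X_t \mid X_{\tau(t-1)}, X_{t+1}) = P(X_t \mid U_t)$, closing the induction.
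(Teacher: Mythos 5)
Your proof is correct and follows a broadly similar route to the paper's — both proceed by induction on $t$, both split $I(Q_{[t]};\cdot)$ into a $Q_{[t-1]}$ contribution and a $Q_t$ contribution, and both deploy the same three ingredients (inductive hypothesis, Markovity, and the assumption $Q_t\perp U_t$) at analogous points. The technical execution differs in two respects worth noting. First, you explicitly truncate $X_{\cB_t}$ to the finite set $V_t=(X_{A_t},X_{t+1})$ at the outset via the chain rule and the Markov property of the future; the paper instead carries $X_{\cB_t}$ throughout the manipulation and discards the far tail only at the very end via $I\bigl(X_t;X_{\cB_t\setminus\{\tau,t+1\}}\mid U_t\bigr)=0$. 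Your truncation keeps all intermediate conditioning sets finite, which is a modest gain in cleanliness. Second, to dispose of the $Q_t$ contribution, the paper invokes a data-processing step (since $Q_t$ is a stochastic function of $(U_t,X_t)$ with fresh randomness, one may replace $Q_t$ by $X_t$ inside the relevant conditional mutual information), whereas you directly compute $P(Q_t\mid V_t,Q_{[t-1]})$ and collapse it to $P(Q_t)$ through the posterior identity $P(X_t\mid V_t,Q_{[t-1]})=P(X_t\mid U_t)$. These are two sides of the same coin, but the posterior-identity route makes the mechanism of the privacy guarantee more concrete.

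One small imprecision in the write-up of the posterior identity: to strip $Q_{[t-1]}$ from the conditioning set $(X_{A_{t-1}},X_{t+1},Q_{[t-1]})$ you cite the inductive hypothesis $Q_{[t-1]}\perp V_{t-1}$ as implying $Q_{[t-1]}\perp X_t\mid X_{A_{t-1}}$. That is not quite the statement needed, because $X_{t+1}$ is also in the conditioning; the step requires $Q_{[t-1]}\perp X_t\mid (X_{A_{t-1}},X_{t+1})$. This does hold, but it follows from the strengthened independence $Q_{[t-1]}\perp(X_{A_{t-1}},X_t,X_{t+1})$, which you in fact establish while treating the first term (IH combined with the Markov consequence $X_{t+1}\perp Q_{[t-1]}\mid(X_{A_{t-1}},X_t)$). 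So the ingredients are all present; you should cite the combined fact rather than the bare IH at that step.
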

\begin{proof}
	See Appendix~\ref{Appendix:proof-privacy}.
\end{proof}

Since the download cost of the scheme is as given in \eqref{eq:scheme-download}, \Ie
$
	\ell_t = \mathbb{E}\left[|Q_t|\right]\,L,	
$
and we desire a scheme with low download cost (high rate), we would like to design an encoding function $\wb{q_t|x_t,u_t}$ that 
minimizes $\mathbb{E}\left[|Q_t|\right]$. 

Hence, it remains to design the distribution $\wb{q_t|x_t,u_t}$ that minimizes $\mathbb{E}\left[|Q_t|\right]$ under the  constraints 
\eqref{eq:scheme-decode} and \eqref{eq:scheme-privacy}.
As such, any feasible solution to the following optimization problem corresponds to an admissible encoding function $\wb{q_t|x_t,u_t}$ as desired. 
\begin{equation}
\label{eq:scheme-LP}
    \begin{aligned}
    & \underset{\wb{q_t|x_t,u_t}}{\text{minimize}}
    & & \mathbb{E}\left[|Q_t|\right] = \sum_{q_t} \pb{q_t}|q_t| &\\
    & \text{subject to}
    & & \pb{x_t,q_t|u_t} =0, \ x_t \notin q_t,  & \\
    & & & \pb{q_t|u_t} = \pb{q_t}. & \\
    \end{aligned}
\end{equation}
Note that the problem is always feasible, as
\begin{equation}
 	\wb{q_t=\cN|x_t,u_t} = 1, ~ \forall u_t, x_t,
 \end{equation} 
is a feasible solution to \eqref{eq:scheme-LP}.

One may also notice that if we treat each probability $\wb{q_t|x_t,u_t}$ for $x_t \in \cN$, $u_t \in \cN^2$ and $q_t \in \sP\left(\cN\right)$ as a decision variable, then both the objective function and two constraints are linear, and hence the optimization problem \eqref{eq:scheme-LP} is indeed a linear programming instance. However, this linear programming problem has $n^3\,2^n$ variables and $n\,2^{n-1}+n^2\,2^n$ constraints.
The scale of the problem is intractable in complexity with any generic linear programming solver. For example, using the techniques presented in \cite{Vaidya}, the complexity of this linear programming is $\cO\left(\left(n^2\,2^{n}\right)^{2.5}\right)$.
This makes the numerical solution impossible when $n$ is large.

Therefore, in the following section, we present a polynomial time algorithm that gives a feasible solution that might not always be optimal.

\section{Efficient ON-OFF Privacy Query Scheme}
\label{section:algorithm_achievable}
Instead of attempting to solve the linear programming problem \eqref{eq:scheme-LP} numerically, we are going to identify a feasible solution $w^{\ast}\left(q_t|x_t,u_t\right)$ to the problem \emph{efficiently}, and bound the objective $\mathbb{E}\left[|Q_t|\right]$ \emph{analytically}, \Ie a feasible solution attains an objective such that
\begin{equation}
\label{eq:LP-feasible}
    \mathbb{E}\left[|Q_t^{\ast}|\right] \leq 1/R^{I}_t =  \sum_{i=1}^{n} i \, \theta_i(t), 
\end{equation}
which means there exists a scheme such that the download cost $\ell_t$ is less than or equal to $L/R^{I}_t$, or $R^{I}_t$ is achievable.

A key observation on \eqref{eq:scheme-LP} is that any tractable solution $\wb{q_t|x_t,u_t}$ must be sparse, \Ie a few non-zero valued probabilities $\wb{q_t|x_t,u_t}$ for $x_t \in \cN$, $u_t \in \cN^2$ and $q_t \in \sP{\left(\cN\right)}$. Otherwise, simply initializing or outputting the solution $\wb{q_t|x_t,u_t}$ introduces an exponential overhead in complexity. 
This observation motivates our algorithm, which admits a sparse $\wb{q_t|x_t,u_t}$. 

Since the time index $t$ will be clear from context, in the sequel we drop it from the subscripts. For any given $\pb{x|u}$, we recall the optimization problem we are interested in,
\begin{equation}
\label{eq:scheme-LP2}
    \begin{aligned}
    & \underset{\wb{q|x,u}}{\text{minimize}}
    & & \mathbb{E}\left[|Q|\right] = \sum_{q} \pb{q}|q| &\\
    & \text{subject to}
    & & \pb{x,q|u} =0, \ x \notin q,  & \\
    & & & \pb{q|u} = \pb{q}, & \\
    \end{aligned}
\end{equation}
where $x \in \cN$, $u \in \cN^2$ and $q \in \sP{\left(\cN\right)}$.

Instead of finding a feasible solution to \eqref{eq:scheme-LP2} directly, we introduce an auxiliary random variable $Z$. Let $Z$ be a multiset~$\left(\cN,f \right)$, where $\cN$ is the ground set and $f$ is the multiplicity function. The cardinality of the multiset $Z$ is the summation of multiplicities of all its element, \Ie
\begin{equation*}
    |Z|= \sum_{x \in \cN} f(x).
\end{equation*}
Let $\cZ$ be the collection of all multisets such that cardinality is bounded by $n$, \Ie 
\begin{equation}
	\cZ = \left\{Z: Z \in \left(\cN,f \right), |Z| \leq n \right\}.
\end{equation}

Then for any given $\pb{x|u}$, we can define an alternative optimization problem:
\begin{equation}
\label{eq:scheme-LP-multiset}
    \begin{aligned}
    & \underset{\wb{z|x,u}}{\text{minimize}}
    & & \mathbb{E}\left[|Z|\right] = \sum_{z} \pb{z}|z| &\\
    & \text{subject to}
    & & \pb{x,z|u} =0, \ x \notin z,  & \\
    & & & \pb{z|u} = \pb{z}, & \\
    \end{aligned}
\end{equation}
where $x \in \cN$, $u \in \cN^2$ and $z \in \cZ$.

One can easily check that any feasible solution to \eqref{eq:scheme-LP-multiset} can be easily transformed to be a feasible solution to \eqref{eq:scheme-LP2} by simply letting $Q=\text{Set}(Z)$, \Ie forcing the multiplicity of elements in $Z$ to be $1$. Moreover, the corresponding solution to \eqref{eq:scheme-LP2} attains a better objective value, \Ie if $\wb{z|x,u}$ is a   feasible solution to \eqref{eq:scheme-LP-multiset} and $\wb{q|x,u}$ is a feasible solution to \eqref{eq:scheme-LP2}, then $\mathbb{E}\left[|Q|\right] \leq \mathbb{E}\left[|Z|\right]$, where $\mathbb{E}\left[|Z|\right]$ is the objective value attained by $\wb{z|x,u}$ and $\mathbb{E}\left[|Q|\right]$ is the objective value attained by $\wb{q|x,u}$, respectively. 
Therefore, we will study the feasible region of \eqref{eq:scheme-LP-multiset} instead. In particular, we will find
a feasible solution $w^{\ast}\left(z|x,u\right)$ such that 
\begin{equation}
	\mathbb{E}\left[|Z^{\ast}|\right] = \sum_{i=1}^{n} i \, \theta_i. 
\end{equation}
Then there exists a corresponding feasible solution $w^{\ast}\left(q|x,u\right)$, by simply letting $Q=\text{Set}(Z)$, to the original problem \eqref{eq:scheme-LP2} such that 
\begin{equation}
	\mathbb{E}\left[|Q^{\ast}|\right] \leq \sum_{i=1}^{n} i \, \theta_i,
\end{equation}
which is the same as \eqref{eq:LP-feasible} and is to be proved.

In the remainder of this section, we start by  describing the algorithm in Subsection~\ref{subsection:algorithm}. We then analyze its complexity in Subsection~\ref{subsection:alg_complexity}, and finally in Subsection~\ref{subsection:alg_verification}, we verify that the algorithm outputs a feasible solution as desired.

\subsection{Algorithm Description}
\label{subsection:algorithm}
In this section, we describe the algorithm to construct a feasible solution $\wb{z|x,u}$ to \eqref{eq:scheme-LP-multiset}, \Ie for any given distribution $\pb{x|u}$, we will give a constructive proof of some $Z$, satisfying that
\begin{equation}
\label{eq:alg-decode}
    \pb{z,x|u}=0,~\forall x \notin z,
\end{equation}
and 
\begin{equation}
\label{eq:alg-privacy}
   \pb{z|u}=\pb{z|u'},~\forall z \in \cZ~\text{and}~u,u' \in \cN^2.
\end{equation}
In particular, we will show that the feasible solution $\wb{z|x,u}$ gives  
\begin{equation}
\label{eq:alg-rate}
	\pb{|Z|=\ell} =  \theta_\ell, ~\ell=1,\ldots, n.
\end{equation}
Note that $\theta_{i} \geq 0$ for all $i=1,\ldots,n$ by the definition \eqref{eq:def-theta}, which is stated in the following proposition. 
\begin{proposition}
\label{proposition:lamma-less-one}
For any given Markov chain and time index $t$,
$
	\theta_{i} \geq 0, 
$
for $i=1,\ldots,n$.
\end{proposition}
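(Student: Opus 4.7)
My plan is to split the proof into the two regimes dictated by the definition of $\theta_i$ in \eqref{eq:def-theta}: the routine case $i<n$, where $\theta_i=\lambda_i-\lambda_{i-1}$ and non-negativity follows directly from the ordering, and the non-trivial boundary case $i=n$, where $\theta_n=1-\lambda_{n-1}$ requires showing $\lambda_{n-1}\le 1$.

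For $i<n$, the argument is short. By the ordering \eqref{eq:def-order}, for every $x\in\cN$ one has $p(X_t=x\mid U_t=u_{x,i})\ge p(X_t=x\mid U_t=u_{x,i-1})$, so summing over $x$ gives $\lambda_i\ge\lambda_{i-1}$; together with the convention $\lambda_0=0$ this yields $\theta_i\ge 0$ for $i=1,\ldots,n-1$.

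The main obstacle is proving $\lambda_{n-1}\le 1$. The strategy is to exhibit a single $u^\ast\in\cN^2$ that simultaneously ``dominates'' all the row-$(n-1)$ choices, in the sense that $p(X_t=x\mid U_t=u^\ast)\ge p(X_t=x\mid U_t=u_{x,n-1})$ for every $x\in\cN$. If such a $u^\ast$ exists, then summing over $x$ and using $\sum_{x\in\cN}p(X_t=x\mid U_t=u^\ast)=1$ immediately yields $\lambda_{n-1}\le 1$, hence $\theta_n\ge 0$.

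Existence of $u^\ast$ will be established by a counting/union-bound argument. For each $x\in\cN$, define
\begin{equation*}
A_x=\{u\in\cN^2:\,p(X_t=x\mid U_t=u)\ge p(X_t=x\mid U_t=u_{x,n-1})\}.
\end{equation*}
Since $p(X_t=x\mid U_t=u_{x,n-1})$ sits at position $n-1$ in a sorted list of length $m=n^2$, at least $m-(n-2)=n^2-n+2$ indices correspond to entries at least as large, giving $|A_x|\ge n^2-n+2$ and hence $|A_x^c|\le n-2$. A union bound then yields
\begin{equation*}
\Big|\bigcap_{x\in\cN}A_x\Big|\ge n^2-\sum_{x\in\cN}|A_x^c|\ge n^2-n(n-2)=2n>0,
\end{equation*}
so $\bigcap_x A_x$ is non-empty and any element $u^\ast$ of it has the required property. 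Combining the two cases completes the proof. The only subtlety to double-check is the handling of ties in \eqref{eq:def-order}, but since the bounds $|A_x|\ge n^2-n+2$ are counted by index rather than by value, ties cause no issue.
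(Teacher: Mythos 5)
Your proof is correct and follows essentially the same strategy as the paper's: reduce to showing that some row-sum of the sorted matrix is at most $1$, and establish this by a counting/pigeonhole argument that produces a single $u^{\ast}$ whose column entries dominate the relevant row simultaneously. The paper's version is marginally more direct — it picks any $u \notin \{u_{x,i}: x\in\cN,\ i\leq n-1\}$ (a set of size at most $n(n-1) < n^2$) and thereby proves the slightly stronger fact $\lambda_n \leq 1$ — whereas you bound $\lambda_{n-1}$ via a union bound on the complements $A_x^c$, but the two computations are equivalent. Your remark about ties is well-placed and correctly resolved.
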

\begin{proof}
	See Appendix~\ref{appendix:proposition-bound-one}.
\end{proof}

One can see that the objective value attained by this feasible solution is 
\begin{equation*}
	\mathbb{E}\left[|Z|\right] = \sum_{i=1}^{n} i \, \theta_i.
\end{equation*}

Before describing the steps of the algorithm we give an intuitive explanation and overview of the algorithm.
In order to minimize $\mathbb{E}\left[|Z|\right]$, we would like to construct some $\wb{z|x,u}$ that makes the probability $\pb{|Z|=\ell}$ larger for smaller $\ell$, \Ie a greedy-like algorithmic approach is appealing. As a result of the two constraints \eqref{eq:alg-decode} and \eqref{eq:alg-privacy}, one can easily check that the maximum value of $\pb{|Z|=1}$ is $\theta_1$, and the solution gives 
\[\pb{z,x|u} = \min_{u' \in \cN^2} \pb{x|u'},~\forall u~\text{and}~z=x.\]
We would like to keep this greedy manner to manage the probabilities $\pb{z,x|u}$ for $|z|=2,\ldots,n$. However, when $|z| \geq 2$, it becomes more complicated. For instance, when $|z|=2$, one of the two elements of the set $z$ has to be $x$, in order to satisfy \eqref{eq:alg-decode}, which corresponds to the decodability constraint. Roughly speaking, we aim to use the second element of $z$ to obfuscate each $x$ with another $x'$ in order to satisfy \eqref{eq:alg-privacy}, which corresponds to the privacy constraint. The challenging part of this algorithm is this choice of $x'$, and the corresponding probability $p(z,x|u)$, where $z=\{x,x'\}$.

The following algorithm, consisting of five main steps, rigorously describes how we design this obfuscation. In Step 1, we calculate preliminaries from the given probability distribution $\pb{x|u}$ and initialize the algorithm. In Step 2, we describe how to properly obfuscate each $x$ with the other $\ell-1$ elements for a given $\ell$, and in Step 3, we describe how to design a common obfuscation (obtain some common sets $z$ of cardinality $\ell$ and some proper values) for all $x \in \cN$ simultaneously. Then, in Step 4, we augment the configurations to the initialized variables, and finally in Step 5, we output the configurations and the values.  Details are given as follows:

\noindent
\textbf{$\bullet$ Step 1: Preliminaries}

For any given distribution $\pb{x|u}$, by sorting $\pb{x|u}$ for each $x \in \cN$, we can easily obtain parameters 
\[\left\{u_{x,i}: x \in \cN, i = 1,\ldots, m \right\},\]
where $u_{x,i}$ is as defined in \eqref{eq:def-order} and $m=|\cN^2|=n^2$. For notational simplicity, let 
\begin{equation*}
	\lambda_{x,i}= \pb{X_t=x|U_t=u_{x,i}}.
\end{equation*}

Let $M$ be an auxiliary $m \times n$ matrix determined by the given $\pb{x|u}$. In particular, we initialize $M$ by 
\begin{equation}
\label{eq:alg-initialize}
     M_{i,j}  =  \max\left\{\pb{X=j|U=i}- \lambda_{j,n-1}, 0\right\}.
\end{equation}
for $i=1,\ldots,m$, and $j=1,\ldots,n$. This matrix will be updated during the following procedure. For the ease of notation, let $M^{-}_{i,j} = a$ denote $M_{i,j} = M_{i,j} - a$, \Ie subtracting $a$ from $M_{i,j}$.

For $\ell=1,\ldots,n-1$ and $x=1,\ldots,n$, we access to $\{u_{x,i}: i =1, \ldots, \ell-1\}$.
For ease of notation, let 
\begin{equation*}
	\cU_{\ell,x}^{-}=\left\{u_{x,i}: i =1,\ldots,\ell-1\right\},
\end{equation*}
and
\begin{equation*}
	\cU_{\ell,x}^{+}=\left\{u_{x,i}: i =\ell,\ldots,m\right\}.
\end{equation*}

\noindent
\textbf{$\bullet$ Step 2: }

For each $i$, or precisely $u_{x,i}$, we choose a collection of pairs 
\begin{equation}
\label{eq:alg-row-compensate}
	I_{\ell,x,i} \times V_{\ell,x,i} = \left\{\left(e_{\ell,x,i,j},v_{\ell,x,i,j} \right):j=1,2,\ldots,c_{\ell,x,i} \right\}   
\end{equation}
such that 
\begin{equation}
\label{eq:alg-row-budget}
	0 \leq v_{\ell,x,i,j} \leq M_{u_{x,i},e_{\ell,x,i,j}},  
\end{equation}
and
\begin{equation}
\label{eq:alg-row-sum}
	\begin{aligned}
		\sum_{j=1}^{c_{\ell,x,i}} v_{\ell,x,i,j} = \lambda_{x,\ell} - \lambda_{x,\ell-1}. 
	\end{aligned}
\end{equation}
where $e_{\ell,x,i,j}$ for $j=1,\ldots,c_{\ell,x,i}$ are distinct indices belonging to $\{1,\ldots,n\}$, and clearly we have $c_{\ell,x,i} \leq n$.  

Then, we update the matrix $M$ by
\begin{equation}
\label{eq:alg-update}
	 M_{u_{x,i},e_{\ell,x,i,j}}^{-} = v_{\ell,x,i,j},
\end{equation}
for all $e_{\ell,x,i,j} \in I_{\ell,x,i}$. We slightly abuse the notation here by using the same notation $M$ to denote the matrix at different points.  Nevertheless, the underlying $\ell$, $x$ and $i$ we are dealing with will be clear from context.

Roughly speaking, we extract  $\lambda_{x,\ell} - \lambda_{x,\ell-1}$ from the $u_{x,i}$-th row of the non-negative matrix $M$ for given $\ell$ and $x$, where $e_{\ell,x,i,j}$ and $v_{\ell,x,i,j}$ specify the column indices and values extracted from each position of $u_{x,i}$-th row. 
The matrix $M$ is always non-negative during the update from \eqref{eq:alg-row-budget} and \eqref{eq:alg-update}, so the existence of such a collection of $I_{\ell,x,i} \times V_{\ell,x,i}$ can be guaranteed if the summation of the $u_{x,i}$-th row of the initialized matrix $M$(c.f.\eqref{eq:alg-initialize}) is greater than or equal to the summation of the subtracted values (the right-hand side of \eqref{eq:alg-existence}) for all $x$ and $\ell$ during the process, which is given by the following proposition. 
\begin{proposition}
\label{prop:alg-existence}
For any $u=1,\ldots,m$,
 \begin{equation}
\label{eq:alg-existence}
\begin{aligned}
	 & \sum_{x=1}^{n} \max\left\{\pb{X=x|U=u}- \lambda_{x,n-1}, 0\right\}  \\
	 & ~~~~ \geq \sum_{\ell=1}^{n-1} \sum_{x:u \in \cU_{\ell,x}^{-}} \left( \lambda_{x,\ell} - \lambda_{x,\ell-1}\right).
\end{aligned}
\end{equation}
\end{proposition}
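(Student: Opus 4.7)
The plan is to rewrite both sides of \eqref{eq:alg-existence} as sums of positive parts, reduce the whole inequality to the single statement $\lambda_{n-1}\le 1$, and then establish this last bound by a double-counting argument that exploits having $m=n^2$ rows available for only $n$ columns.

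First, I would observe that, by \eqref{eq:alg-initialize}, the left-hand side of \eqref{eq:alg-existence} is exactly
$\sum_{x=1}^n\bigl[p(X=x\,|\,U=u)-\lambda_{x,n-1}\bigr]^+$,
where $[\,\cdot\,]^+=\max\{\cdot,0\}$. For the right-hand side, let $i_0(x)$ denote the unique index with $u=u_{x,i_0(x)}$. The condition $u\in\cU_{\ell,x}^-$ then becomes $i_0(x)\le\ell-1$, so swapping the order of summation and telescoping the inner sum $\sum_{\ell=i_0(x)+1}^{n-1}(\lambda_{x,\ell}-\lambda_{x,\ell-1})$ gives $[\lambda_{x,n-1}-\lambda_{x,i_0(x)}]^+=[\lambda_{x,n-1}-p(X=x\,|\,U=u)]^+$ (the positive-part notation absorbs the cases $i_0(x)\ge n-1$, where the telescoped sum is empty and the difference is non-positive). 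Hence \eqref{eq:alg-existence} is equivalent to
$$
\sum_{x=1}^n\bigl[p(X=x\,|\,U=u)-\lambda_{x,n-1}\bigr]^+ \;\ge\; \sum_{x=1}^n\bigl[\lambda_{x,n-1}-p(X=x\,|\,U=u)\bigr]^+.
$$

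Next, applying the elementary identity $a^+-(-a)^+=a$ term by term, the difference between the two sides collapses to $\sum_x\bigl(p(X=x\,|\,U=u)-\lambda_{x,n-1}\bigr)=1-\lambda_{n-1}$, where I used $\sum_x p(X=x\,|\,U=u)=1$ together with $\sum_x\lambda_{x,n-1}=\lambda_{n-1}$ from \eqref{eq:def-lambda}. The proposition therefore reduces to showing $\lambda_{n-1}\le 1$.

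The main obstacle is this final inequality, because $\lambda_{n-1}$ is the sum over $x$ of the $(n-1)$-st smallest entry of column $x$ of a matrix whose \emph{rows} (not columns) sum to $1$, so the bound is not immediate. I would prove it by double counting. For each fixed $x\in\cN$, at most $n-2$ rows $u\in\cN^2$ satisfy $p(X=x\,|\,U=u)<\lambda_{x,n-1}$, so at least $n^2-n+2$ rows satisfy $p(X=x\,|\,U=u)\ge\lambda_{x,n-1}$. Summing over the $n$ columns, the number of pairs $(u,x)$ with this property is at least $n(n^2-n+2)=n^3-n^2+2n$. If every one of the $n^2$ rows participated in at most $n-1$ such pairs, the total would be at most $n^2(n-1)=n^3-n^2$, a strict contradiction. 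Hence some row $u^\ast\in\cN^2$ satisfies $p(X=x\,|\,U=u^\ast)\ge\lambda_{x,n-1}$ for every $x\in\cN$, and summing this over $x$ yields $1=\sum_x p(X=x\,|\,U=u^\ast)\ge\lambda_{n-1}$, completing the reduction and hence the proof.
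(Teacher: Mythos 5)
Your proof is correct, and the reduction step follows essentially the same path as the paper's own argument. Writing $\alpha_x$ for your $i_0(x)$ (the rank of $u$ in column $x$'s ordering), the paper likewise telescopes the inner sum on the right to $\lambda_{x,n-1}-\lambda_{x,\alpha_x}$ for those $x$ with $\alpha_x\le n-2$, writes the left-hand side as $\sum_{x:\alpha_x\ge n-1}(\lambda_{x,\alpha_x}-\lambda_{x,n-1})$, and observes that the difference collapses to $\sum_x(\lambda_{x,\alpha_x}-\lambda_{x,n-1})=1-\lambda_{n-1}=\theta_n$, at which point it invokes Proposition~\ref{proposition:lamma-less-one}. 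Where you diverge is in the proof of $\lambda_{n-1}\le 1$. The paper's proof of Proposition~\ref{proposition:lamma-less-one} establishes the slightly stronger $\lambda_n\le 1$ in one stroke: the ``small'' slots $\{u_{x,i}:x\in\cN,\ i\le n-1\}$ number at most $n(n-1)<n^2=|\cN^2|$, so some $u^\ast$ lies outside all of them and therefore satisfies $\pb{X=x|U=u^\ast}\ge\lambda_{x,n}$ for every $x$ simultaneously; summing over $x$ gives $1\ge\lambda_n$. Your pigeonhole/double-counting argument is a valid alternative and reaches the same conclusion, but it does a little extra work: the same cardinality gap (at most $n(n-2)$ ``bad'' slots versus $n^2$ rows) already exhibits such a $u^\ast$ directly, without passing through a comparison of total pair counts. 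Both arguments exploit the identical structural fact that the $n^2$ available rows far outnumber the roughly $n^2-n$ slots occupied by small column entries, so the difference is one of phrasing rather than of idea.
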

\begin{proof}
    See Appendix~\ref{Appendix:proof-alg-existence}.
\end{proof}

\noindent
\textbf{$\bullet$ Step 3:}

For fixed $\ell$ and $x$, after finishing the above process for all $i=1,\ldots,\ell-1$, we obtain $I_{\ell,x,i}$ and $V_{\ell,x,i}$ for $i= 1,\ldots,\ell-1$. Provided  $I_{\ell,x,i}$ and $V_{\ell,x,i}$ for $i= 1,\ldots,\ell-1$, we pick a collection of pairs 
\begin{equation*}
	\left\{(\zeta_{\ell,x,k},\nu_{\ell,x,k}): k=1,2,\ldots, c_{\ell,x} \right\}
\end{equation*} 
such that 
\begin{equation*}
	\zeta_{\ell,x,k} \in I_{\ell,x,1} \times I_{\ell,x,2} \times \cdots \times I_{\ell,x,\ell-1},
\end{equation*}
and 
\begin{equation}
\label{eq:alg-multiset-value-zx}
	\sum_{k: \zeta_{\ell,x,k}(i)=e_{\ell,x,i,j}} \nu_{\ell,x,k} = v_{\ell,x,i,j}, 
\end{equation}
for all $i = 1,\ldots, \ell-1$ and  $j = 1,\ldots, c_{\ell,x,i}$, where $\zeta_{\ell,x,k}(i)$ is the $i$-th element of $\zeta_{\ell,x,k}$, \Ie $\zeta_{\ell,x,k}(i) \in I_{\ell,x,i}$.

A simple deterministic approach of picking such a collection of $(\zeta_{\ell,x,k},\nu_{\ell,x,k})$ can be basically illustrated by Figure~\ref{fig:alg-boundary}. Roughly speaking, there is a buffer tracking the front of the sets $V_{\ell,x,i}$ for $i=1,\ldots,\ell-1$. Each time, the buffer pushes the minimal value among them \Ie $\nu_{\ell,x,k}$, minus the value from the front, and adds one more value from the same set $V_{\ell,x,i}$ which has been pushed out. The corresponding positions of values in the buffer form the set $\zeta_{\ell,x,k}$. As such, we can easily see that 
\begin{equation}
\label{eq:alg-sum-nu}
\begin{aligned}
	\sum_{k=1}^{c_{\ell,x}} \nu_{\ell,x,k} = \lambda_{x,\ell} - \lambda_{x,\ell-1}.
\end{aligned}
\end{equation}
Also, one can easily check that this process returns 
\begin{equation}
\label{eq:size-clx}
	c_{\ell,x} = \sum_{i=1}^{\ell-1} c_{\ell,x,i} \leq n(\ell-1).
\end{equation}
\begin{figure}[t]
    \centering
    \scalebox{0.8}{
         \begin{tikzpicture}
  \draw[ultra thick,fill=bleudefrance2] (0,5) rectangle (2,6);
  \draw[ultra thick,fill=darkgreen2] (2,5) rectangle (5,6);
  \draw[ultra thick,fill=magenta2] (5,5) rectangle (8,6);
  \node at (1,5.5) {$v_{\ell,x,1,1}$};
  \node at (3.5,5.5) {$v_{\ell,x,1,2}$};
  \node at (6.5,5.5) {$v_{\ell,x,1,3}$};
  \draw[ultra thick,fill=darkgreen2] (0,3.5) rectangle (4,4.5);
  \draw[ultra thick,fill=magenta2] (4,3.5) rectangle (8,4.5);
  \node at (2,4) {$v_{\ell,x,2,1}$};
  \node at (6,4) {$v_{\ell,x,2,2}$};
  \draw[ultra thick,fill=bleudefrance2] (0,1) rectangle (6,2);
  \draw[ultra thick,fill=darkgreen2] (6,1) rectangle (8,2);
  \node at (3,1.5) {$\cdots$};
  \node at (7,1.5) {$v_{\ell,x,i,c_{\ell,x,i}}$};
  \node at (7.8,2.75) [rotate=90] {\Large $\dots$};
  \draw[red,dashed,thick] (2,6.25) -- (2,1);
  \draw[red,dashed,thick] (4,6.25) -- (4,1);
  \draw[red,dashed,thick] (5,6.25) -- (5,1);
  \draw[red,dashed,thick] (6,6.25) -- (6,1);
  \draw[thick,|-|] (0,6.25) -- (2,6.25);
  \node at (1,6.5) {$\nu_{\ell,x,1}$};
  \draw[thick,|-|] (2,6.25) -- (4,6.25);
  \node at (3,6.5) {$\nu_{\ell,x,2}$};
  \draw[thick,|-|] (4,6.25) -- (5,6.25);
  \node at (4.5,6.5) {$\nu_{\ell,x,3}$};
  \draw[thick,|-|] (5,6.25) -- (6,6.25);
  \node at (5.5,6.5) {$\cdots$};
  \draw[thick,|-|] (6,6.25) -- (8,6.25);
  \node at (7,6.5) {$\nu_{\ell,x,c_{\ell,x}}$};
\end{tikzpicture}    %
         }

\caption{The rows represents $V_{\ell,x,1},\ldots,V_{\ell,x,\ell-1}$ for given $\ell$ and $x$. Each block represents an element $v_{\ell,x,i,j}$ in the set $V_{\ell,x,i}$, where $j=1,\ldots,c_{\ell,x,i}$. Each $\nu_{\ell,x,k}$ can be chosen to be the value of the difference between two consecutive boundaries of blocks, \Eg $\nu_{\ell,x,1} = v_{\ell,x,1,1}$ and $\nu_{\ell,x,2} = v_{\ell,x,2,1} - v_{\ell,x,1,1}$ etc. The corresponding $\zeta_{\ell,x,k}$ can be chosen to be  $\zeta_{\ell,x,1} = \left(e_{\ell,x,1,1}, e_{\ell,x,2,1}, \cdots \right)$ and  $\zeta_{\ell,x,2} = \left(e_{\ell,x,1,2}, e_{\ell,x,2,1}, \cdots \right)$ etc.}
\label{fig:alg-boundary}
\end{figure}
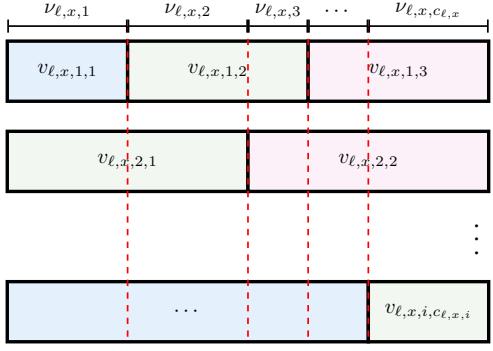

\noindent
\textbf{$\bullet$ Step 4: Augment}

For each $k=1,\ldots,c_{\ell,x}$, let 
\begin{equation}
\label{eq:alg-def-z}
	z_{\ell,x,k}=\{\zeta_{\ell,x,k},x\},
\end{equation}
and 
\begin{equation}
\label{eq:argument-seperate}
	\begin{aligned}
	   & \cF_{\ell,x,k} = \left\{\left(\bar{z},\bar{x},\bar{u}\right):\bar{z}=z_{\ell,x,k},\bar{x}=x,\bar{u} \in \cU_{\ell,x}^{+} \right\} \bigcup \\ 
	   &  \left\{\left(\bar{z},\bar{x},\bar{u}\right): \bar{z}=z_{\ell,x,k}, \bar{x}=\zeta_{\ell,x,k}(i), \bar{u} = u_{x,i} \in \cU_{\ell,x}^{-} \right\}.
	\end{aligned}
\end{equation}

For each $\ell$ and $x$, we can obtain $\cF_{\ell,x,k}$ and  $\nu_{\ell,x,k}$ for $k=1,\ldots,c_{\ell,x}$. 
The tuple $\left(\bar{z},\bar{x},\bar{u}\right)$ in $\cF_{\ell,x,k}$ is indeed the non-zero valued position and $\nu_{\ell,x,k}$ is the value that we will assign to the probability $\pb{\bar{z},\bar{x}|\bar{u}}$. 
However, since there may exist duplicated tuples in $\cF_{\ell,x,k}$ for different $x$, we augment the value $\nu_{\ell,x,k}$ corresponding to the same tuple $\left(\bar{z},\bar{x},\bar{u}\right)$, \Ie 
\begin{equation}
\label{eq:argument-l}
	\cF_{\ell} =\bigcup_{x=1}^{n} \bigcup_{k=1}^{c_{\ell,x}} \cF_{\ell,x,k},
\end{equation}
and for any $\left(\bar{z},\bar{x},\bar{u}\right) \in \cF_{\ell}$, 
\begin{equation}
\label{eq:assign-l}
g\left(\bar{z},\bar{x},\bar{u}\right)
	= \sum_{x=1}^{n} \sum_{k:\left(\bar{z},\bar{x},\bar{u}\right) \in \cF_{\ell,x,k}}  \nu_{\ell,x,k}.
\end{equation}

After obtaining $\cF_\ell$ for $\ell=1,\ldots,n-1$, for $\ell = n$, let
\begin{equation}
\label{eq:argument-n}
	\cF_{n}= \left\{\left(\bar{z},\bar{x},\bar{u}\right): \bar{z} = \cN, M_{\bar{u},\bar{x}}>0 \right\},
\end{equation}
and
\begin{equation}
\label{eq:assign-n}
	g(\bar{z},\bar{x},\bar{u})= M_{\bar{u},\bar{x}},
\end{equation}
for any $\left(\bar{z},\bar{x},\bar{u}\right) \in \cF_{n}$. 

\noindent
\textbf{$\bullet$ Step 5: Output}

The output of the algorithm is $\left\{\cF,g(\cF)\right\}$, where 
\begin{equation*}
	\cF=\{\cF_\ell: \ell =1,\ldots,n\}.
\end{equation*}
stores the non-zero valued positions of an admissible distribution $\pb{z,x|u}$ for $x \in \cN$, $u \in \cN^2$ and $z \in \cZ$, and $g(\cF)$ stores the corresponding probabilities.

\subsection{Complexity}
\label{subsection:alg_complexity}
For the sake of completeness, we discuss the complexity of the algorithm.  
As said, the bottleneck is to represent the solution $\wb{z|x,u}$ for $z \in \cZ$, $x \in \cN$ and $u \in \cN^2$, which has exponential number of values, so the complexity is indeed dominated by the size of $\cF$, \Ie the non-zero valued positions of the output distribution $\pb{z,x|u}$. 

It is notable that 
$|\cF_{\ell,x,k}|=m$  and $c_{\ell,x} \leq n^2$ from \eqref{eq:size-clx}, so  we have 
\begin{equation}
	|\cF| \leq \sum_{\ell=1}^{n} \sum_{x=1}^{n} \sum_{k=1}^{c_{\ell,x}} |\cF_{\ell,x,k}| \leq m n^4 = n^6,  
\end{equation}
\Ie the complexity of the algorithm is $\cO(n^6)$.

The purpose of the complexity analysis here is to justify that the proposed algorithm is 
with $\text{poly}(n)$ complexity. One may possibly reduce the complexity by orders by utilizing some data structures, which is beyond the interest of this paper.

\subsection{Algorithm Verification}
\label{subsection:alg_verification}
In this subsection, we will verify the algorithm, \Ie we will prove that it outputs a distribution $\pb{z,x|u}$ satisfying  \eqref{eq:alg-decode}, \eqref{eq:alg-privacy} and \eqref{eq:alg-rate} for any given distribution $\pb{x|u}$.

First, we show that the algorithm described in \ref{subsection:algorithm} outputs a distribution $\pb{z,x|u}$ satisfying \eqref{eq:alg-decode} and \eqref{eq:alg-privacy} for any given distribution $\pb{x|u}$. 
\begin{proposition}
\label{proposition:alg-justification}
	For any given $\pb{x|u}$ for $u \in \cN^2$ and $x \in \cN$, $\left\{\cF, g(\cF) \right\}$ returns non-zero valued positions and values of some distribution $\pb{z,x|u}$ such that $\pb{z,x|u}=0$ for all $x \notin z$ and $\pb{z|u}=\pb{z|u'}$ for all $z \in \cZ$ and $u,u' \in \cN^2$.
\end{proposition}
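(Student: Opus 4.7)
The plan is to verify the three properties in sequence: decodability, validity as a probability distribution, and privacy.

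\textbf{Decodability.} I would first check that for every tuple $(\bar z,\bar x,\bar u) \in \cF$, we have $\bar x \in \bar z$, so that $\pb{z,x|u}$ has its support on pairs with $x\in z$. For $\ell<n$, the sets $\cF_{\ell,x,k}$ defined in \eqref{eq:argument-seperate} contain only tuples of the form $(z_{\ell,x,k},x,\bar u)$ (with $\bar u\in\cU_{\ell,x}^+$) or $(z_{\ell,x,k},\zeta_{\ell,x,k}(i),u_{x,i})$ (with $i\leq \ell-1$). Since $z_{\ell,x,k}=\{\zeta_{\ell,x,k},x\}$ by \eqref{eq:alg-def-z}, both $x$ and every $\zeta_{\ell,x,k}(i)$ lie in $z_{\ell,x,k}$. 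For $\ell=n$, $\bar z=\cN$ trivially contains every $\bar x$.

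\textbf{Privacy for $\ell<n$.} For a fixed $z$ with $|z|=\ell$, I would compute $\pb{z\mid u}=\sum_{\bar x} g(z,\bar x,u)$ and show it is independent of $u$. The key observation is that for every $x\in\cN$, the family $\{\cU_{\ell,x}^+,\cU_{\ell,x}^-\}$ partitions $\cN^2$. Fix any pair $(x,k)$ with $z_{\ell,x,k}=z$. If $u\in\cU_{\ell,x}^+$, exactly one tuple in $\cF_{\ell,x,k}$ has $\bar u=u$, namely $(z,x,u)$, and it contributes $\nu_{\ell,x,k}$; if instead $u=u_{x,i}\in\cU_{\ell,x}^-$, exactly one tuple has $\bar u=u$, namely $(z,\zeta_{\ell,x,k}(i),u)$, and it again contributes $\nu_{\ell,x,k}$. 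Summing and using \eqref{eq:assign-l} yields
\begin{equation*}
\pb{z\mid u}=\sum_{(x,k):\,z_{\ell,x,k}=z}\nu_{\ell,x,k},
\end{equation*}
which has no dependence on $u$.

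\textbf{Privacy for $\ell=n$ and total mass.} This is the step I expect to be the main obstacle because it requires tracking all subtractions applied to $M$. For fixed $u$ and each $x$, write $u=u_{x,i_x}$ so that $p(x\mid u)=\lambda_{x,i_x}$. Row $u$ receives subtractions only for $\ell$ with $u\in\cU_{\ell,x}^-$, i.e.\ $i_x\leq \ell-1$, and by \eqref{eq:alg-row-sum} the telescoping sum of subtractions attributable to variable $x$ equals $\max\{\lambda_{x,n-1}-p(x\mid u),0\}$ (the max handles the edge cases $i_x\geq n-1$). Combining with the initialization \eqref{eq:alg-initialize} and using the identity $\max\{a,0\}-\max\{-a,0\}=a$, the final row sum is
\begin{equation*}
\sum_x M_{u,x}=\sum_x\bigl(p(x\mid u)-\lambda_{x,n-1}\bigr)=1-\lambda_{n-1}=\theta_n,
\end{equation*}
which is independent of $u$, giving $\pb{z=\cN\mid u}=\theta_n$. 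A parallel computation gives the total mass: for each $\ell<n$ the contribution from $x$ to row $u$ is $\lambda_{x,\ell}-\lambda_{x,\ell-1}$ (in both the $\cU_{\ell,x}^+$ and $\cU_{\ell,x}^-$ cases, by \eqref{eq:alg-sum-nu}), which sums to $\theta_\ell$ across $x$, and telescoping $\sum_{\ell=1}^{n-1}\theta_\ell+\theta_n=1$ confirms $\sum_{z,x}\pb{z,x\mid u}=1$. Together with the non-negativity guaranteed by \eqref{eq:alg-row-budget}, this shows $\{\cF,g(\cF)\}$ specifies a valid distribution satisfying both \eqref{eq:alg-decode} and \eqref{eq:alg-privacy}.
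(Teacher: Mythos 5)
Your decodability and privacy arguments are correct, and in places sharper than the paper's. The observation that $\{\cU_{\ell,x}^+,\cU_{\ell,x}^-\}$ partitions $\cN^2$, so that each $\cF_{\ell,x,k}$ contributes exactly $\nu_{\ell,x,k}$ to $p(z_{\ell,x,k}\mid u)$ for \emph{every} $u$, cleanly reproduces the paper's \eqref{eq:alg-privacy-algorithm-justification}; your net-row-sum computation for the $\ell=n$ case (via $\max\{a,0\}-\max\{-a,0\}=a$) likewise recovers \eqref{eq:evaluate-value-n}.

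However, there is a gap. You only verify $\sum_{z,x}p(z,x\mid u)=1$, whereas the proposition — and the paper's third verification step — requires the per-column marginal
\begin{equation*}
\sum_{z:(z,\bar x,\bar u)\in\cF}g(z,\bar x,\bar u)=p(\bar x\mid\bar u)\quad\text{for each }\bar x\in\cN,\ \bar u\in\cN^2.
\end{equation*}
Decodability, privacy, and total mass~$1$ together do \emph{not} imply this: a distribution concentrated on $z=\cN$ with any $X$-marginal would pass all three of your checks. The marginal consistency is essential because the algorithm's output must factor as $p(z,x\mid u)=p(x\mid u)\,w(z\mid x,u)$ to define the query rule, and verifying it is the delicate bulk of the paper's proof. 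It requires tracking how mass lands on a \emph{fixed} column $\bar x$ within a fixed row $\bar u$: contributions arrive from $\cF_{\ell,\bar x,k}$ when $\bar u\in\cU_{\ell,\bar x}^+$ (accumulating to $\min\{\lambda_{\bar x,n-1},\,p(\bar x\mid\bar u)\}$), and, for $x\neq\bar x$, from those $\cF_{\ell,x,k}$ whose $\zeta_{\ell,x,k}$ selects $\bar x$ — precisely the amounts Step~2 subtracts from $M_{\bar u,\bar x}$ — together with the $\ell=n$ residue (jointly accumulating to $\max\{p(\bar x\mid\bar u)-\lambda_{\bar x,n-1},\,0\}$). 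These two pieces sum to $p(\bar x\mid\bar u)$. Your total-mass computation sums across all columns simultaneously (which is why the row-sum bookkeeping works out so simply) and never isolates this per-column balance, so the step is genuinely missing.
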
 
\begin{proof}

As claimed, $\cF$ and $g\left(\cF\right)$ store the non-zero valued positions and values of $\pb{z,x|u}$, so it is equivalent for us to show that
\begin{enumerate}
	\item For any $\left(\bar{z},\bar{x},\bar{u}\right) \in \cF$, we have 
		\begin{equation}
			\bar{x} \in \bar{z}.
		\end{equation}
	\item \label{item:two} For any given $\bar{z}$, $\bar{u}$ and $\bar{u}'$, we have
		\begin{equation}
			\sum_{x:\left(\bar{z},x,\bar{u}\right) \in \cF} g\left(\bar{z},x,\bar{u}\right) = \sum_{x:\left(\bar{z},x,\bar{u}'\right) \in \cF} g\left(\bar{z},x,\bar{u}'\right).
		\end{equation}
	\item For any given $\bar{x}$ and $\bar{u}$, we have	
		\begin{equation}
			\sum_{z:\left(z,\bar{x},\bar{u}\right) \in \cF} g\left(z,\bar{x},\bar{u}\right) = \pb{\bar{x}|\bar{u}}.
		\end{equation}		
\end{enumerate}
Details can be found in Appendix~\ref{Appendix:proposition-justify}.
\end{proof}

Next, we show that the algorithm described in \ref{subsection:algorithm} returns $\pb{z,x|u}$ satisfying  \eqref{eq:alg-rate}.
\begin{proposition}
\label{proposition:alg-performance}
	For any given $\pb{x|u}$ for $u \in \cN^2$ and $x \in \cN$,
the algorithm returns some distribution $\pb{z,x|u}$ such that 
\begin{equation}
	\pb{|Z|=\ell} = \theta_{\ell}, ~\ell=1,\ldots,n.
\end{equation}
\end{proposition}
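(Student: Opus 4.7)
The plan is to fix an arbitrary $u^*\in\cN^2$, compute $p(|Z|=\ell\mid u^*)$ directly from the algorithm's output $\{\cF, g(\cF)\}$, and then invoke Proposition~\ref{proposition:alg-justification} to conclude that the conditional probability equals the unconditional probability $p(|Z|=\ell)$. For notational convenience I will write $i(x',u^*)$ for the unique index such that $u^*=u_{x',i(x',u^*)}$, so that $u^*\in\cU_{\ell,x'}^{-}$ iff $i(x',u^*)\le \ell-1$ and $u^*\in\cU_{\ell,x'}^{+}$ iff $i(x',u^*)\ge \ell$.

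For $\ell<n$, I would start from
\begin{equation*}
p(|Z|=\ell\mid u^*)=\sum_{(z,x):(z,x,u^*)\in\cF_\ell}g(z,x,u^*),
\end{equation*}
substitute the definition \eqref{eq:assign-l} of $g$, and swap the order of summation to obtain
\begin{equation*}
p(|Z|=\ell\mid u^*)=\sum_{x'=1}^n\sum_{k=1}^{c_{\ell,x'}}\nu_{\ell,x',k}\cdot N_{\ell,x',k}(u^*),
\end{equation*}
where $N_{\ell,x',k}(u^*):=|\{(z,x):(z,x,u^*)\in\cF_{\ell,x',k}\}|$. The crucial observation, extracted from the two-part definition \eqref{eq:argument-seperate}, is that $N_{\ell,x',k}(u^*)=1$ in all cases: if $u^*\in\cU_{\ell,x'}^{+}$, the single contributing tuple is Type-A with $\bar x=x'$; if $u^*\in\cU_{\ell,x'}^{-}$, the single contributing tuple is Type-B at position $i(x',u^*)$, with $\bar x=\zeta_{\ell,x',k}(i(x',u^*))$. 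Since these two cases are complementary, the count is exactly one. Applying \eqref{eq:alg-sum-nu} gives $\sum_k\nu_{\ell,x',k}=\lambda_{x',\ell}-\lambda_{x',\ell-1}$, and summing over $x'$ yields $\lambda_\ell-\lambda_{\ell-1}=\theta_\ell$ by \eqref{eq:def-lambda} and \eqref{eq:def-theta}.

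For $\ell=n$, from \eqref{eq:argument-n} and \eqref{eq:assign-n} we immediately have $p(|Z|=n\mid u^*)=\sum_{x}M_{u^*,x}$ evaluated at the \emph{final} matrix. I would track the row-sum of row $u^*$ across the updates \eqref{eq:alg-update}. The total amount subtracted from row $u^*$ is
\begin{equation*}
\sum_{\ell'=1}^{n-1}\sum_{x':u^*\in\cU_{\ell',x'}^{-}}(\lambda_{x',\ell'}-\lambda_{x',\ell'-1})=\sum_{x'}\sum_{\ell'=i(x',u^*)+1}^{n-1}(\lambda_{x',\ell'}-\lambda_{x',\ell'-1}),
\end{equation*}
which telescopes to $\sum_{x'}\max\{\lambda_{x',n-1}-p(x'\mid u^*),0\}$ after observing that $p(x'\mid u^*)=\lambda_{x',i(x',u^*)}$ and that the sum is empty precisely when $i(x',u^*)\ge n-1$. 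Subtracting this from the initial row-sum $\sum_x\max\{p(x\mid u^*)-\lambda_{x,n-1},0\}$ produced by \eqref{eq:alg-initialize}, the max-functions cancel pairwise into ordinary differences, giving $\sum_x(p(x\mid u^*)-\lambda_{x,n-1})=1-\lambda_{n-1}=\theta_n$.

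The main obstacle is the bookkeeping in the $\ell<n$ step: verifying that the multiplicity $N_{\ell,x',k}(u^*)$ collapses to $1$ requires carefully parsing the two halves of \eqref{eq:argument-seperate} together with the partition of $\cN^2$ into $\cU_{\ell,x'}^{-}\cup\cU_{\ell,x'}^{+}$. Once that combinatorial cancellation is in place, both cases reduce to summing the $\nu_{\ell,x',k}$'s, and the telescoping needed for $\ell=n$ follows the same index-tracking argument that underlies Proposition~\ref{prop:alg-existence}.
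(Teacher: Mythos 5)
Your proof is correct. For $\ell<n$ the route is essentially the paper's: fix $u^*$, expand $p(|Z|=\ell\mid u^*)$ through \eqref{eq:assign-l}, swap the order of summation, and observe that each $\cF_{\ell,x',k}$ contributes exactly one tuple with $\bar u = u^*$ (your $N_{\ell,x',k}(u^*)=1$); the paper reaches the same cancellation by citing its equation~\eqref{eq:alg-privacy-algorithm-justification}, which packages \eqref{eq:alg-indicator-drop-u}, and then collapsing the sum over $z$ via \eqref{eq:unique-z}. Where you genuinely diverge is the $\ell=n$ case: the paper simply normalizes, writing $p(|Z|=n)=1-\sum_{\ell<n}\theta_\ell=\theta_n$, which is licensed because Proposition~\ref{proposition:alg-justification} already guarantees $\sum_z p(z\mid u)=1$ for each $u$. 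You instead track the row sum of $M$ over the updates and telescope, which recovers the content of Remark~\ref{remark:alg-null-matrix} and \eqref{eq:evaluate-value-n} from scratch; the key cancellation $\max\{a,0\}-\max\{-a,0\}=a$ is applied correctly, and the case split on whether $i(x',u^*)\ge n-1$ is handled consistently (the empty telescoping sum and the vanishing $\max$ coincide even when $\lambda_{x',i(x',u^*)}=\lambda_{x',n-1}$). Your $\ell=n$ route is longer but more self-contained and makes explicit where the residual mass $\theta_n$ lives in $M$, whereas the paper's normalization shortcut is terser because the required bookkeeping was already carried out in the proof of Proposition~\ref{prop:alg-existence} and its accompanying remark.
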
 
\begin{proof}
	See	Appendix~\ref{Appendix:proposition-perform}
\end{proof}

\section{An outer bound}
\label{section:converse}

In this section, we will show that any ON-OFF privacy scheme must satisfy $R_t \leq R_t^{O}$. 

First, we define an auxiliary random variable $Y_t$ taking values in $\sP\left(\cN\right)$ based on the decodability of the subset of messages. Specifically, let $Y_t$ be a function of $Q_t$ such that $Y_t = \cD$ for $\cD \in \sP\left(\cN\right)$ if the user may decode the messages $W_{\cD}$ but not any message $W_{i}$ for $i \in \cN\backslash \cD$ from the answer $A_t$. 
Roughly speaking, $Y_t$ represents the capability of decoding messages from the query $Q_t$. Note that since the query $Q_t$ and messages $W_{\cN}$ are independent, the decodability of any message is known by the server only through $Q_t$, that is, $Y_t$ is a function of $Q_t$. 
In this way, the alphabet $\cQ$ (may be infinite if the query is coded) of the query is partitioned into $2^n$ classes based on the decodability of the subset of the messages. 
Clearly, from the definition of $Y_t$, we have that
the length of the answer $\ell(Q_t)$ satisfies 
\begin{equation*}
	\ell(Q_t) \geq |Y_t|\,L,
\end{equation*}
since the answer $A_t$ is at least of length $|Y_t|\,L$ if the user can decode $|Y_t|$ messages from the answer $A_t$.
Hence, the download cost $\ell_t$ is bounded by 
\begin{equation}
\label{eq:converse-length}
	\ell_t \geq \mathbb{E}\left[|Y_t|\right]\,L.
\end{equation}

Next, we start to reinterpret the privacy and the decodability constraints  in terms of the auxiliary variable $Y_t$. By the definition of $Y_t$, the decodability can be written as 
\begin{equation}
\label{eq:converse-decode}
	\pb{x_t,y_t}=0, \forall x_t \notin y_t,
\end{equation}
where $x_t \in \cN$ and $y_t \in \sP\left(\cN\right)$.

Recall the privacy constraint 
\begin{equation*}
  I \left(X_{\cB_t};Q_{[t]}\right) = 0,
\end{equation*} 
and we must have 
\begin{align*}
	I \left(X_{\cB_t};Q_{[t]}\right)
	 \utag{a}{\geq} I\left(U_{t};Q_t\right) 
	 \utag{b}{\geq} I\left(U_{t};Y_t\right), 
\end{align*}
where \uref{a} follows from $U_t=(X_{\tau},X_{t+1}) \subset X_{\cB_t}$ and \uref{b} follows because $Y_t$ is a function of $Q_t$.

Thus, we can relax the privacy constraint by 
\begin{equation}
\label{eq:converse-privacy}
 I \left(U_{t};Y_t\right) = 0.
\end{equation}

For any given $\pb{x_t|u_t}$, if $Y_t$ takes values in $\sP\left(\cN\right)$ and satisfies \eqref{eq:converse-decode} and \eqref{eq:converse-privacy}, then $\mathbb{E}\left[|Y_t|\right]$ is lower bounded by the following lemma. 
\begin{lemma}
\label{lemma:outer}
	For any random variables $U$, $X$ and $Y$, taking values in the alphabet $\cN^2$, $\cN$ and $\sP\left(\cN\right)$ respectively, if $Y$ is independent of $U$, and $p(x,y|u)=0$ for $x \notin y$, then 
	\begin{equation}
	\label{eq:lemma-expectation-lower}
		\mathbb{E}\left[|Y|\right] \geq \sum_{x \in \cN} \max_{u \in \cN^2} \pb{x|u}. 
	\end{equation}
\end{lemma}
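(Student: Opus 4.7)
My plan is to prove the lemma by rewriting $\mathbb{E}[|Y|]$ as a sum of indicator probabilities and then bounding each of them from below by the corresponding $\max_u p(x|u)$.

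First, I would expand the cardinality of $Y$ as $|Y| = \sum_{x \in \cN} \mathbbm{1}[x \in Y]$ and take expectation to obtain
\begin{equation*}
    \mathbb{E}\left[|Y|\right] = \sum_{x \in \cN} \Pr(x \in Y).
\end{equation*}
In view of this decomposition, it suffices to establish the per-coordinate bound
\begin{equation*}
    \Pr(x \in Y) \geq \max_{u \in \cN^2} \pb{x|u}, \quad \forall x \in \cN,
\end{equation*}
and then sum over $x \in \cN$ to obtain \eqref{eq:lemma-expectation-lower}.

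To establish the per-coordinate bound, I would fix $x \in \cN$ and $u \in \cN^2$ and compute $\pb{x|u}$ by marginalizing over $y$. Using the decodability hypothesis $\pb{x,y|u}=0$ for $x \notin y$, only terms with $x \in y$ survive, so
\begin{equation*}
    \pb{x|u} = \sum_{y: x \in y} \pb{x,y|u} \leq \sum_{y: x \in y} \pb{y|u}.
\end{equation*}
Since $Y$ is independent of $U$, we have $\pb{y|u} = \pb{y}$, and therefore
\begin{equation*}
    \pb{x|u} \leq \sum_{y: x \in y} \pb{y} = \Pr(x \in Y).
\end{equation*}
Since the right-hand side does not depend on $u$, I may take the maximum over $u \in \cN^2$ on the left to obtain the desired per-coordinate bound. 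Summing over $x \in \cN$ then yields \eqref{eq:lemma-expectation-lower}.

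I do not expect any serious obstacles here: both the decodability and the independence hypotheses are used in their most direct form, and the main move is simply recognizing the identity $\mathbb{E}[|Y|] = \sum_x \Pr(x \in Y)$ that converts the expected size into a sum of marginal inclusion probabilities. The only thing to be careful about is applying decodability before independence, since decodability is a conditional statement on $U$ while independence lets us drop the conditioning afterward.
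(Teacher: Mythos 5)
Your proof is correct and takes essentially the same route as the paper's: both argue via the per-coordinate bound $\max_{u} \pb{x|u} \leq \sum_{y:x\in y} \pb{y} = \Pr(x\in Y)$ (you bound $\pb{x,y|u}\leq \pb{y|u}$ directly, the paper factors as $\pb{y|u}\pb{x|y,u}\leq \pb{y|u}$, which is the same inequality) and then sum over $x$, using the identity $\mathbb{E}[|Y|] = \sum_x \Pr(x\in Y)$ that the paper recovers by an interchange of sums at the end.
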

\begin{proof}
	See Appendix~\ref{Appendix:lemma-outer}.
\end{proof}

By substituting \eqref{eq:lemma-expectation-lower} in \eqref{eq:converse-length}, we have 
\begin{equation}
\label{eq:converse-download-lower-bound}
	\ell_t \geq L\,\sum_{x_t \in \cN} \max_{u_t \in \cN^2} \pb{x_t|u_t}.
\end{equation}

Therefore, for any ON-OFF privacy scheme satisfying the decobability and privacy constraint, we know that the download cost is lower bounded by the right-hand side of
\eqref{eq:converse-download-lower-bound}.
In other words, any ON-OFF privacy scheme must satisfy 
\begin{equation*}
	\frac{1}{R_t} \geq \frac{1}{R_t^{O}} = \sum_{x_t \in \cN} \max_{u_t \in \cN^2} \pb{x_t|u_t} = \lambda_m(t).
\end{equation*}

\section{LP Formulation of optimal achievable rate}
\label{section:discussion}
In this section, we present an implicit characterization of the optimal rate, which is formulated by a linear program with an exponential number (in $n$) of variables and constraints. 

As discussed in Section~\ref{section:scheme}, the query design relies on solving the following linear program: 
\begin{equation}
\label{eq:LP-achievable}
    \begin{aligned}
    & \underset{\wb{q|x,u}}{\text{minimize}}
    & & \mathbb{E}\left[|Q|\right] = \sum_{q} \pb{q}|q| &\\
    & \text{subject to}
    & & \pb{x,q|u} =0, \ x \notin q,  & \\
    & & & \pb{q|u} = \pb{q}, & \\
    \end{aligned}
\end{equation}
where $x \in \cN$, $u \in \cN^2$, $q \in \sP{\left(\cN\right)}$ and probabilities $\pb{x|u}$ are given. We know that any feasible solution to the above problem yields an achievable scheme. In other words, the rate $R_t$ is achievable if 
\begin{equation*}
	\frac{1}{R_t} \geq C^{\ast}_1,
\end{equation*}
where $C^{\ast}_1$ is the optimal value to \eqref{eq:LP-achievable}.

On the other hand, one may notice that the key lemma, \Ie Lemma~\ref{lemma:outer}, to show the outer bound, indeed indicates that any achievable scheme must satisfy that 
\begin{equation*}
	\frac{1}{R_t} \geq C^{\ast}_2, 
\end{equation*}
where $C^{\ast}_2$ is the optimal value to the following problem:
\begin{equation}
\label{eq:LP-converse}
    \begin{aligned}
    & \underset{\wb{y|x,u}}{\text{minimize}}
    & & \mathbb{E}\left[|Y|\right] = \sum_{y} \pb{y}|y| &\\
    & \text{subject to}
    & & \pb{x,y|u} =0, \ x \notin y,  & \\
    & & & \pb{y|u} = \pb{y}, & \\
    \end{aligned}
\end{equation}
where $x \in \cN$, $u \in \cN^2$, $y \in \sP{\left(\cN\right)}$ and probabilities $\pb{x|u}$ are given by the Markov chain.

Although problems \eqref{eq:LP-achievable} and \eqref{eq:LP-converse} have different physical meanings, it is easy to see that they have the same optimal value, \Ie $C^{\ast}_1 = C^{\ast}_2$. Therefore, by letting 
$C_t$ be the optimal value to both problems, the achievable region can be fully characterized by
\begin{corollary}
\label{corollary:main}
The rate tuple $\left(R_t:t \in \mathbb{N}\right)$ is achievable if and only if $R_t \leq C_t$.
\end{corollary}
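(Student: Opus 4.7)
The plan is to observe that the two linear programs \eqref{eq:LP-achievable} and \eqref{eq:LP-converse} are structurally identical: both minimize $\sum p(\cdot)\,|\cdot|$ over a conditional distribution $w(\cdot|x,u)$ on $\sP(\cN)$, subject to the decodability constraint $p(x,\cdot|u)=0$ for $x\notin\cdot$ and the privacy/independence constraint $p(\cdot|u)=p(\cdot)$. Since only the name and interpretation of the set-valued variable differ ($q$ as the queried subset in \eqref{eq:LP-achievable}, $y$ as the decodable subset in \eqref{eq:LP-converse}), the two feasible regions and objectives coincide verbatim, and hence $C^{\ast}_1=C^{\ast}_2=:C_t$.

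For the achievability direction I would instantiate the uncoded-query scheme of Section~\ref{section:scheme} using an optimal solution $w^{\ast}(q|x,u)$ to \eqref{eq:LP-achievable}: at time $t$, sample $Q_t$ from $w^{\ast}(\cdot|X_t,U_t)$ and request the messages indexed by $Q_t$. Decodability is immediate from $w^{\ast}(q|x,u)=0$ whenever $x\notin q$; privacy follows from Lemma~\ref{Lemma:privacy} since the LP's second constraint forces $Q_t$ to be independent of $U_t$ at every $t$; and the expected query length is exactly the LP optimum, so $\ell_t/L=C_t$ is attained.

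For the converse direction I would follow Section~\ref{section:converse} to introduce the decodability-class variable $Y_t$, a deterministic function of $Q_t$ satisfying $\ell(Q_t)\geq|Y_t|\,L$ and therefore $\ell_t/L\geq\mathbb{E}[|Y_t|]$. The privacy constraint $I(X_{\cB_t};Q_{[t]})=0$ relaxes to $I(U_t;Y_t)=0$ as shown in that section, so the joint distribution of $(Y_t,X_t,U_t)$ induced by any achievable scheme is a feasible point of \eqref{eq:LP-converse}. Consequently $\mathbb{E}[|Y_t|]\geq C_t$, which combined with the length bound yields $\ell_t/L\geq C_t$ for every achievable scheme.

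The only substantively new claim is the identification $C^{\ast}_1=C^{\ast}_2$, and this is immediate by inspection since the two linear programs are literally the same program up to renaming of the decision variable. There is no real obstacle here; the corollary is a repackaging of the achievability of Section~\ref{section:scheme} and the converse of Section~\ref{section:converse} into a single tight characterization, with the LP value $C_t$ serving as the common hinge between the two arguments.
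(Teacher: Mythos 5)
Your proof is correct and matches the argument the paper sketches: identifying the two LPs as one and the same program (so $C^{\ast}_1=C^{\ast}_2$), reading achievability off an optimal solution to \eqref{eq:LP-achievable} via the scheme of Section~\ref{section:scheme} plus Lemma~\ref{Lemma:privacy}, and for the converse observing that the joint law of $(Y_t,X_t,U_t)$ induced by any admissible scheme is a feasible point of \eqref{eq:LP-converse}, which is the step the paper leaves implicit by pointing to Lemma~\ref{lemma:outer}. One small note: your argument actually establishes $1/R_t\geq C_t$ (equivalently $R_t\leq 1/C_t$), which is what is intended; the corollary's displayed condition $R_t\leq C_t$ appears to be a typo in the paper, and your proof gets the direction of the reciprocal right.
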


However, it is notable that Corollary~\ref{corollary:main} is an \emph{implicit} characterization, because as we discussed, the exponential blow-up of the number of variables and constraints makes the linear programming problem intractable.

\begin{remark}  [Window size $\omega$]
From our earlier discussion, we know that the feasible region of \eqref{eq:LP-achievable} denotes schemes that only require a window of size $\omega = 1$. 
Although we have assumed that the user knows the future requests within a window of positive size $\omega$, increasing the window size into the future beyond $\omega=1$ does not, in fact, increase the rate. Intuitively, this phenomenon stems from the Markov assumption we use to model the user's requests. If the window size $\omega=0$, \Ie no future requests are known, the privacy defined in \eqref{eq:privacy}  has to be relaxed, and only past requests can be protected, which was studied in \cite{Ye_2020}.

\end{remark}

In this paper, we actually proposed an explicit scheme to the ON-OFF privacy problem by finding a feasible solution that might not be optimal, to \eqref{eq:LP-achievable}, which is of polynomial time complexity. Moreover, we show that our scheme is optimal for some cases in Corollary~\ref{corollary:two} and Corollary~\ref{cor:sym_markov}.

	\section{Conclusion and Future Directions}
	\label{section:conclusion}	
	In this paper, we continue  to look at the problem of turning privacy ON and OFF in an information retrieval setting when the user's interests are correlated over time. We model this correlation by a Markov chain with $n$ states. Our previous work in \cite{Naim_2019} focused on privacy for past interests. Our work in \cite{Ye_2019} studied privacy for the past and the future, albeit for the special case of Markov 
chains with $n=2$ states. In this paper, we generalize the work in \cite{Ye_2019} to Markov chains with $n\geq 2$ states. We give a new achievable scheme with polynomial time complexity and a general upper bound on the achievable rate. We prove the optimality of our scheme for special cases, namely, a family of  symmetric Markov chains, and two-state Markov chains.  
	
	    Future directions of this work   include finding tighter outer bounds on the rate and efficient constructions of ON-OFF privacy schemes that would achieve them. Also, it is worthy to investigate settings  in which the user's requests follow a different model than the Markov chain, or the user's requests and desired privacy status are correlated.

\appendices

\section{Optimality for $n=2$}
\label{Appendix:proof_n=2}
The special case when $n=2$ was first studied in~\cite{Ye_2019}. For $n=2$, the two bounds \eqref{eq:thm-inner} and \eqref{eq:thm-outer} match, \Ie $R_t^{I} = R_t^{O}$. To see this we write $R_t^{I}$ by 
\begin{equation*}
	\frac{1}{R_t^{I}} = \sum_{i=1}^{m} i\,\theta_i(t) = 2 - \lambda_{1}(t).
\end{equation*}
For a given $x_t$, \Eg $x_t=1$, suppose that $u^{\ast} ~=~ \argmin_{u_t} \pb{x_t|u_t}$. Then we can see that, for $\bar{x}_t=2$, $u^{\ast} = \argmax_{u_t} \pb{\bar{x}_t|u_t}$
since 
$
	\pb{x_t|u_t} + \pb{\bar{x}_t|u_t} = 1,
$
for any $u_t$ when $n=2$. 
Thus, we have 
\begin{equation*}
	\min_{u_t} \pb{x_t|u_t} + \max_{u_t} \pb{\bar{x}_t|u_t} = 1,
\end{equation*}
for any $x_t$ and $\bar{x}_t = \cN\backslash\{x_t\}$, which implies that 
\begin{align*}
	\lambda_{1}(t) & = \sum_{x_t=1}^2 \min_{u_t} \pb{x_t|u_t} 
	= \sum_{x_t=1}^2 \left(1 - \max_{u_t} \pb{\bar{x}_t|u_t} \right) \\
	& = 2 - \sum_{\bar{x}_t=1}^2  \max_{u_t} \pb{\bar{x}_t|u_t} 
	= 2 - \lambda_m(t).
\end{align*}

Therefore, we can obtain that 
\begin{equation*}
	\frac{1}{R_t^{I}} =  2 - \lambda_{1}(t)  = \lambda_m(t) = \frac{1}{R_t^{O}}.
\end{equation*}

\section{Proof of Corollary \ref{cor:sym_markov} }
\label{Appendix:proof_sym} 
We first take the transition matrix $P$ to the power of $t$, \Ie
\begin{equation*}
    (P^t)_{i,j}=\begin{cases}
        \frac{(n-1)^{t-1}-(n\alpha-1)^t}{n(n-1)^{t-1}}, &\quad \text{if} \quad i=j,\vspace{10pt}\\
        \frac{(n-1)^t-(n\alpha-1)^t}{n(n-1)^t},&\quad \text{if} \quad i\neq j,
    \end{cases}
\end{equation*}
for all $i,j\in \{1,\ldots,n\}$.

Then, the probabilities 
$\pb{x_t|u_t}$ can be written as
\begin{equation}
\label{eq:explicit_prob}
    \pb{X_t=j|X_\tau = i, X_{t+1}=k}=\frac{P_{j,k} (P^{\delta})_{i,j}}{(P^{\delta+1})_{i,k}},
\end{equation}
where $\delta=t-\tau$ and $i,j \in \{1,\ldots,n\}$. By invoking the symmetry of the given Markov chain, we notice that  
the right-hand side of \eqref{eq:explicit_prob} can only have a few of expressions depending on the choices of $i$, $j$ and $k$, \Ie
\begin{equation}
\label{eq:cor2_proof}
 	\begin{aligned}
 	& \pb{X_t=j|X_\tau = i, X_{t+1}=k} = \\
 	&
 	\begin{cases}
        \sigma_1:=\tfrac{\alpha\left((n-1)^{\delta}+(n\alpha-1)^{\delta}(n-1)\right)}{(n-1)^{\delta}+(n\alpha-1)^{\delta+1}}, & \hspace{-9pt}\text{if} \hspace{4pt} i=j=k,\vspace{5pt}\\
        \sigma_2:=\tfrac{(1-\alpha)\left((n-1)^{\delta}+(n\alpha-1)^{\delta}(n-1)\right)}{(n-1)^{\delta+1}-(n\alpha-1)^{\delta+1}}, & \hspace{-9pt}\text{if} \hspace{4pt} i=j\neq k,\vspace{5pt}\\
        \sigma_3:=\tfrac{\alpha\left((n-1)^{\delta+1}-(n\alpha-1)^{\delta}(n-1)\right)}{(n-1)^{\delta+1}-(n\alpha-1)^{\delta+1}}, & \hspace{-9pt}\text{if} \hspace{4pt} i\neq j= k,\vspace{5pt}\\
        \sigma_4:=\tfrac{(1-\alpha)\left((n-1)^{\delta}-(n\alpha-1)^{\delta}\right)}{\left((n-1)^{\delta}+(n\alpha-1)^{\delta+1}\right)(n-1)}, & \hspace{-9pt}\text{if} \hspace{4pt} i=k\neq j,\vspace{5pt}\\
        \sigma_5:=\tfrac{(1-\alpha)\left((n-1)^{\delta}-(n\alpha-1)^{\delta})\right)}{(n-1)^{\delta+1}-(n\alpha-1)^{\delta+1}}, & \hspace{-9pt}\text{if} \hspace{4pt} i\neq j\neq k.
        \end{cases} 
 	\end{aligned}
 \end{equation}

    By examining $\sigma_1$ to $\sigma_5$ in \eqref{eq:cor2_proof},
    we have $\sigma_1\geq \sigma_3\geq\sigma_2\geq\sigma_5\geq\sigma_4$,
    for $\frac{1}{n} \leq \alpha \leq 1$. 
   
    For a fixed $j\in \{1,\ldots,n\}$, by counting the number of times each condition of \eqref{eq:cor2_proof}, \Eg $i=j=k$, $i=j\neq k$, etc., is satisfied for $i,k \in \cN$,
    we can get the following ordering of $n^2$ probabilities $\pb{X_t=j|X_\tau = i, X_{t+1}=k}$ (for a fixed $j$):
    \begin{align*}
       \hspace{-5pt}\underbrace{\sigma_4\leq \dots\leq \sigma_4}_{n-1} \leq\underbrace{\sigma_5\leq \dots \leq\sigma_5}_{(n-1)(n-2)}
       \geq&\underbrace{\sigma_2\leq \dots \leq\sigma_2}_{n-1}\leq
       \nonumber\\
       &\underbrace{\sigma_3\leq \dots \leq\sigma_3}_{n-1} 
       \leq\sigma_1.
    \end{align*}
 
Due to the symmetry, this ordering remains the same for all $j\in \{1,\ldots,n\}$. Given this ordering for any fixed $j$, we can check  
    \begin{align*}
        \frac{1}{R_t^O} & =\sum_{x \in \cN} \pb{X_t=x|U_t=u_{x,n^2}}=\sum_{x\in\cN} \sigma_1=n \sigma_1,
    \end{align*}
where $\pb{X_t=x|U_t=u_{x,n^2}}$ is defined in \eqref{eq:def-order}.

Also, from 
\eqref{eq:def-theta}, we can check that $\theta_1= n\sigma_4$, $\theta_i=0$ for $i=2,\dots,n-1$, and $\theta_n=1-n\sigma_4$, so
  we have
    \begin{align*}
        \frac{1}{R_t^I}&=\sum_{i=1}^n i\theta_i= n\sigma_4 + n - n^2\sigma_4.
    \end{align*}

By substituting the expression of $\sigma_1$ and $\sigma_4$ defined in \eqref{eq:cor2_proof}, one can verify that $n\sigma_4 + n - n^2\sigma_4 = n\sigma_1$, which implies in ${R_t^I} ={R_t^O}$.
 This completes the proof of Corollary~\ref{cor:sym_markov}.

\begin{remark}
When $0 \leq \alpha < \frac{1}{n}$, we may follow the same steps as we did but divide the discussion into two cases: $\delta = t-\tau$ is even or odd.
When $\delta$ is even, we have $\sigma_2\geq\sigma_4\geq\sigma_5\geq\sigma_1\geq\sigma_3$, and
    \begin{equation*}
        \frac{1}{R_t^O}=n\sigma_2\leq \frac{1}{R_t^I}=n\sigma_3+n-n^2\sigma_3.
    \end{equation*}
Similarly when $\delta$ is odd, we have $\sigma_5\geq\sigma_4\geq\sigma_2\geq\sigma_3\geq\sigma_1$, and
    \begin{equation*}
        \frac{1}{R_t^O}=n\sigma_5\leq \frac{1}{R_t^I}=\sigma_3(2n-n^2)-n\sigma_1+n.
    \end{equation*}
In both cases, we can see a gap between $R_t^O$ and $R_t^I$, which is as per our observation in
Example~\ref{ex:symmetric}.
\end{remark}


\section{Proof of Lemma~\ref{Lemma:privacy}}
\label{Appendix:proof-privacy}
First, let us recall that $\cB_t = \{i: i \leq t, F_i=\text{ON}\} \cup \{i: i \geq t+1\}$ and $U_t=\left(X_{\tau},X_{t+1}\right)$ where $\tau= \max\{i: i \leq t, F_i=\text{ON}\}$. 

We prove the statement by induction on $t$. Consider the base case $t=0$. From the assumption $F_0 = \text{ON}$ (assumption of this paper), we know that $U_0 = \{X_0, X_1\}$ and $\cB_0 = \{i:  i = 0,1,\ldots\}$. If $Q_0$ is a stochastic function of $X_0$ and $X_1$, and $Q_0$ is independent of $X_0$ and $X_1$, then we have 
\begin{align*}
 	I\left(Q_0;X_{\cB_0}\right) 
 	& = I\left(Q_0;X_{0},X_1\right) + I\left(Q_0;X_{\cB_0}|X_1, X_0\right) = 0,
\end{align*} 
\Ie $Q_0$ is independent of $X_{\cB_0}$. The last equality follows because $Q_0$ is independent of $X_0$ and $X_1$, and $Q_0$ is a stochastic function of $X_0$ and $X_1$.

Now, we start the inductive step. Assume that the statement is true for some $t-1$, \Ie if $Q_i$ is a stochastic function of $U_i$ and $X_i$, and $Q_i$ is independent of $U_i$ for $i =0,1,\ldots,t-1$, then $Q_{[t-1]}$ is independent of $X_{\cB_{t-1}}$. 

Next, for the case $t$, if $Q_i$ is a stochastic function of $U_i$ and $X_i$, and $Q_i$ is independent of $U_i$ for $i =0,1,\ldots,t$, then we know from the inductive assumption that $Q_{[t-1]}$ is independent of $X_{\cB_{t-1}}$, \Ie  
\begin{equation}
\label{eq:proof-privacy-markov-induction}
	I\left(Q_{[t-1]}; X_{\cB_{t-1}}\right) = 0.
\end{equation}
Then consider
\begin{align*}
	& I\left(Q_{[t]}; X_{\cB_{t}}\right) \\
	& ~~~ = I\left(Q_{[t-1]}; X_{\cB_{t}}\right) + I\left(Q_t; U_t|Q_{[t-1]}\right) \\
	& ~~~~~~ + I\left(Q_t; X_{\cB_{t}\backslash\{\tau,t+1\}}|Q_{[t-1]},U_t\right) \\
	& ~~~ \utag{a}{\leq} I\left(Q_{[t-1]}; X_{\cB_{t-1}}\right) + I\left(Q_t; U_t|Q_{[t-1]}\right) \\
	& ~~~~~~ + I\left(Q_t; X_{\cB_{t}\backslash\{\tau,t+1\}}|Q_{[t-1]},U_t\right) \\
	& ~~~ \utag{b}{=} I\left(Q_t; U_t|Q_{[t-1]}\right) + I\left(Q_t; X_{\cB_{t}\backslash\{\tau,t+1\}}|Q_{[t-1]},U_t\right) \\
	& ~~~ = I\left(Q_t; U_t\right) + I\left(Q_t;Q_{[t-1]}|U_t\right) - I\left(Q_t; Q_{[t-1]}\right) \\
	& ~~~~~~ + I\left(Q_t; X_{\cB_{t}\backslash\{\tau,t+1\}}|Q_{[t-1]},U_t\right) \\
	& ~~~ \leq I\left(Q_t; U_t\right)  + I\left(Q_t; X_{\cB_{t}\backslash\{\tau,t+1\}},Q_{[t-1]}|U_t\right) \\
	& ~~~ \utag{c}{=} I\left(Q_t; U_t\right)  + I\left(X_t; X_{\cB_{t}\backslash\{\tau,t+1\}},Q_{[t-1]}|U_t\right) \\
	& ~~~ = I\left(Q_t; U_t\right)  + I\left(X_t; X_{\cB_{t}\backslash\{\tau,t+1\}}|U_t\right)  \\
	& ~~~~~~ + I\left(X_t; Q_{[t-1]}|X_{\cB_{t}\backslash\{\tau,t+1\}}, U_t\right) \\
	& ~~~ \utag{d}{=} I\left(Q_t; U_t\right)  + I\left(X_t; X_{\cB_{t}\backslash\{\tau,t+1\}}|U_t\right)  \\
	& ~~~ \utag{e}{=} I\left(Q_t; U_t\right)= 0,  
\end{align*}
where \uref{a} follows from $\cB_{t} \subseteq \cB_{t-1}$ by inspecting the definition of $\cB_{t}$,
\uref{b} follows from \eqref{eq:proof-privacy-markov-induction},
\uref{c} follows because $Q_t$ is a stochastic function of $U_t$ and $X_t$,
\uref{d} follows from $X_{\cB_{t-1}}=\left\{X_{\cB_{t}\backslash\{\tau,t+1\}}, U_t, X_t \right\}$ and \eqref{eq:proof-privacy-markov-induction}, 
and \uref{e} follows from the Markovity of $X_t$.

\section{Proof of Proposition~\ref{proposition:lamma-less-one}}
\label{appendix:proposition-bound-one}
From the definitions in \eqref{eq:def-lambda} and \eqref{eq:def-theta}, we can easily see that $\lambda_i$ is non-decreasing with $i$, so $\theta_{i} \geq 0$ for all $i$ if and only if $\lambda_{n-1} \leq 1$. It is sufficient for us to show that $\lambda_{n} \leq 1$.

For any given distribution $\pb{x_t|u_t}$ where $x_t \in \cN$ and $u_t \in \cN^2$, we claim that there exists some $u$ such that 
\begin{equation}
\label{eq:proposition1-proof}
	\pb{x_t=x|u_t=u} \geq \pb{x_t=x|u_t=u_{x,n}}, \forall x \in \cN.
\end{equation}
To see this, one can choose any $u \in \cN^2\backslash\{u_{x,i}: x \in \cN, i=1,\ldots,n-1\}$. Note that since $|\cN^2|=n^2$ and $|\{u_{x,i}: x \in \cN, i=1,\ldots,n-1\}|=n(n-1)$, the set $\cN^2\backslash\{u_{x,i}: x \in \cN, i=1,\ldots,n-1\}$ is non-empty.

By summing \eqref{eq:proposition1-proof} over all $x$
, we have
\begin{equation*}
	\sum_{x \in \cN} \pb{x_t=x|u_t=u} \geq \sum_{x \in \cN} \pb{x_t=x|u_t=u_{x,n}} = \lambda_{n}.
\end{equation*}
Since $\sum_{x \in \cN} \pb{x_t=x|u_t=u} = 1$ for a fixed $u$, we complete showing that $\lambda_{n} \leq 1$. 

\section{Proof of Proposition~\ref{prop:alg-existence}}
\label{Appendix:proof-alg-existence}

Recall that we need to show that for any $u=1,\ldots,m$,
 \begin{equation}
\label{eq:alg-existence-proof}
\begin{aligned}
	 & \sum_{x=1}^{n} \max\left\{\pb{X=x|U=u}- \lambda_{x,n-1}, 0\right\}  \\
	 & ~~~~ \geq \sum_{\ell=1}^{n-1} \sum_{x:u \in \cU_{\ell,x}^{-}} \left( \lambda_{x,\ell} - \lambda_{x,\ell-1}\right).
\end{aligned}
\end{equation}
Assume without loss of generality that $ u = u_{1,\alpha_1} = \cdots = u_{n,\alpha_n}.$
Then, the left-hand side of \eqref{eq:alg-existence-proof}
can be written as
\begin{align*}
	& \sum_{x=1}^{n} \max\left\{\pb{X=x|U=u}-\lambda_{x,n-1}, 0\right\} \nonumber \\
	& = \sum_{x:\alpha_x \geq n-1} \pb{X=x|U=u}-\lambda_{x,n-1} \nonumber \\
	& = \sum_{x:\alpha_x \geq n-1} \left(\lambda_{x,\alpha_x}-\lambda_{x,n-1} \right), 
\end{align*}
and the right-hand side of \eqref{eq:alg-existence-proof} can be written as
\begin{align*}
 \sum_{\ell=1}^{n-1} \sum_{x: u \in \cU_{\ell,x}^{-}} \lambda_{x,\ell} - \lambda_{x,\ell-1} \nonumber& = \sum_{x: \alpha_x \leq n-2} 
    \sum_{\ell = \alpha_{x}+1}^{n-1} \lambda_{x,\ell} - \lambda_{x,\ell-1} \nonumber \\
& = \sum_{x: \alpha_x \leq n-2} 
   \left( \lambda_{x,n-1} - \lambda_{x,\alpha_x} \right). 
\end{align*}
Since
\begin{align}
& \sum_{x:\alpha_x \geq n-1} \left(\lambda_{x,\alpha_x} - \lambda_{x,n-1} \right)- \sum_{x: \alpha_x \leq n-2} \left(\lambda_{x,n-1} - \lambda_{x,\alpha_x} \right) \nonumber \\
& =  \sum_{x=1}^n \left(\lambda_{x,\alpha_x} - \lambda_{x,n-1} \right)  = 1 - \sum_{x=1}^n \lambda_{x,n-1} = \theta_{n}, \label{eq:value-remaining}
\end{align}
we can see that \eqref{eq:alg-existence-proof} is established if and only if $\theta_{n} \geq 0$, which is given by Proposition~\ref{proposition:lamma-less-one}. This completes the proof of Proposition~\ref{prop:alg-existence}.

\begin{remark}
\label{remark:alg-null-matrix}
To benefit the following proof, we give an immediate implication of \eqref{eq:value-remaining} here. As described, the right-hand side of \eqref{eq:alg-existence-proof} is the total values assigned for $\ell=1,\ldots,n-1$
and the left-hand side of \eqref{eq:alg-existence-proof} is the initialization of the matrix $M$, so the remaining values will be assigned for $\ell=n$ as described in \eqref{eq:argument-n} and \eqref{eq:assign-n}. As such, we know from \eqref{eq:value-remaining} that 
\begin{equation}
\label{eq:evaluate-value-n}
\sum_{x: \left(z,x,u \right) \in \cF_{n}} g(z,x,u) = \theta_{n},
\end{equation}
for any $u \in \{1,\ldots,m\}$ and $z=\{1,\ldots,n\}$.
\end{remark}

\section{Proof of Proposition~\ref{proposition:alg-justification}}
\label{Appendix:proposition-justify}
Before proving the proposition, we provide some observations of $\cF_{\ell,x,k}$ for some $x$, $k$ and $\ell=1,\ldots,n-1$ by examining \eqref{eq:argument-seperate}. Let $\mathbbm{1}\{\cdot\}$ be the indicator function.
	\begin{itemize}
		\item If $\left(\bar{z},\bar{x},\bar{u}\right) \in \cF_{\ell,x,k}$ for some $\ell$, $x$ and $k$, then $\bar{z} = z_{\ell,x,k}$ is uniquely determined, \Ie if $\left(\bar{z},\bar{x},\bar{u}\right) \in \cF_{\ell,x,k}$, then 
		\begin{equation}
		\label{eq:unique-z}
		\left(z',x',u'\right) \notin \cF_{\ell,x,k}, \forall z' \neq \bar{z}.
		\end{equation}

		\item For any $\left(\bar{z},\bar{x},\bar{u}\right) \in \cF_{\ell,x,k}$,  $|\bar{z}| = \ell$,
		and $\bar{x} \in \bar{z}$.

		\item
		 The cardinality of each $\cF_{\ell,x,k}$ is $|\cF_{\ell,x,k}| = m$. In particular, all tuples $\left(\bar{z},\bar{x},\bar{u}\right) \in \cF_{\ell,x,k}$ have distinct values of $\bar{u}$. In other words, let $\left(\bar{z}, \cdot,\cdot\right) \in \cF_{\ell,x,k}$ denote that there exists some $\bar{x},\bar{u}$ such that $\left(\bar{z}, \bar{x},\bar{u} \right) \in \cF_{\ell,x,k}$, and then 
		\begin{equation}
			\label{eq:alg-indicator-drop-u}
		\sum_{x'=1}^n \mathbbm{1}\left\{\left(\bar{z},x',u'\right) \in \cF_{\ell,x,k}\right\} 
		= \mathbbm{1}\left\{\left(\bar{z}, \cdot,\cdot\right) \in \cF_{\ell,x,k} \right\},
		\end{equation}
		for any $u' \in \{1,\ldots,m\}$.
	\end{itemize}

\begin{enumerate}
\item The first statement is straightforward. Suppose that $\left(\bar{z},\bar{x},\bar{u}\right) \in \cF_{\ell}$ for some $\ell$. If $\ell=1,\ldots,n-1$, we know from \eqref{eq:argument-l} that $\cF_{\ell}$ is the union of $\cF_{\ell,x,k}$. For each $\cF_{\ell,x,k}$, we know that $\bar{x} \in \bar{z}$ for any $\left(\bar{z},\bar{x},\bar{u}\right) \in \cF_{\ell,x,k}$. If $\ell=n$,  $\bar{z} = \cN$ for all $\left(\bar{z},\bar{x},\bar{u}\right) \in \cF_{n}$, so $\bar{x} \in \bar{z}$.

\item For the second statement, when $|\bar{z}|= \ell \in \{1,\ldots,n-1\}$, 
\begin{align}
	 &\sum_{x:\left(\bar{z},x,\bar{u}\right) \in \cF} g\left(\bar{z},x,\bar{u}\right) 
	 = \sum_{x:\left(\bar{z},x,\bar{u}\right) \in \cF_{\ell}}  g\left(\bar{z},x,\bar{u}\right) \nonumber \\
	& \utag{a}{=} \sum_{x} \sum_{x'=1}^{n} \sum_{k:\left(\bar{z},x,\bar{u}\right) \in \cF_{\ell,x',k}}  \nu_{\ell,x',k} \nonumber \\
	& \utag{b}{=} \sum_{x'=1}^{n} \sum_{k=1}^{c_{\ell,x'}}  \nu_{\ell,x',k} \cdot \mathbbm{1}\left\{\left(\bar{z},\cdot,\cdot\right) \in \cF_{\ell,x',k}  \right\}, \label{eq:alg-privacy-algorithm-justification}
\end{align}
where \uref{a} follows from \eqref{eq:assign-l} and \uref{b} follows from \eqref{eq:alg-indicator-drop-u}. 

When $|\bar{z}|=\ell=n$, \Ie $\bar{z}=\{1,\ldots,n\}$, we have 
\begin{align}
 \sum_{x:\left(\bar{z},x,\bar{u}\right) \in \cF} g\left(\bar{z},x,\bar{u}\right) 
	& = \sum_{x:\left(\bar{z},x,\bar{u}\right) \in \cF_{n}}  g\left(\bar{z},x,\bar{u}\right) 
	 \utag{c}{=} \theta_{n}, \label{eq:proposition-state2-n}
\end{align}
where \uref{c} follows from \eqref{eq:evaluate-value-n}.

Since both the right-hand sides of  \eqref{eq:alg-privacy-algorithm-justification} and \eqref{eq:proposition-state2-n} are independent of $\bar{u}$, for any given $\bar{z}$, $\bar{u}$ and $\bar{u}'$
\begin{equation*}
	\sum_{x:\left(\bar{z},x,\bar{u}\right) \in \cF} g\left(\bar{z},x,\bar{u}\right) = \sum_{x:\left(\bar{z},x,\bar{u}'\right) \in \cF} g\left(\bar{z},x,\bar{u}'\right).
\end{equation*}

\item

As for the third statement, for any given $\bar{x}$ and $\bar{u}$, 
\begin{align}
	& \sum_{z:\left(z,\bar{x},\bar{u}\right) \in \cF} g\left(z,\bar{x},\bar{u}\right)  = \sum_{\ell=1}^{n} \sum_{z:\left(z,\bar{x},\bar{u}\right) \in \cF_{\ell}} g\left(z,\bar{x},\bar{u}\right) \nonumber \\
	& = \sum_{\ell=1}^{n-1} \sum_{z:\left(z,\bar{x},\bar{u}\right) \in \cF_{\ell}} g\left(z,\bar{x},\bar{u}\right) +  \sum_{z:\left(z,\bar{x},\bar{u}\right) \in \cF_{n}} g\left(z,\bar{x},\bar{u}\right). \label{eq:statement3-two-terms}
\end{align}

For the first term of \eqref{eq:statement3-two-terms}, we have
\begin{align}
	&  \sum_{\ell=1}^{n-1} \sum_{z:\left(z,\bar{x},\bar{u}\right) \in \cF_{\ell}} g\left(z,\bar{x},\bar{u}\right)   \nonumber \\
	& \utag{a}{=} \sum_{\ell=1}^{n-1} \sum_{z} \sum_{x=1}^{n} \sum_{k:\left(z,\bar{x},\bar{u}\right) \in \cF_{\ell,x,k}}  \nu_{\ell,x,k}  \nonumber \\
	& \utag{b}{=} \sum_{\ell=1}^{n-1} \sum_{x=1}^{n}  \sum_{k:\left(\cdot,\bar{x},\bar{u}\right) \in \cF_{\ell,x,k}}  \nu_{\ell,x,k}  , \label{eq:alg-prob-dis} 
\end{align}
where \uref{a} follows by substituting \eqref{eq:assign-l}, and \uref{b} follows from \eqref{eq:unique-z}. 

By examining $\cF_{\ell,x,k}$, we can see two disjoint subsets,
\begin{multline*}
	  \cF_{\ell,x,k}^{-} 
	  := \{\left(\bar{z},\bar{x},\bar{u}\right): \bar{z}=z_{\ell,x,k}, \bar{x}=\zeta_{\ell,x,k}(i), \\\bar{u}=u_{x,i} \in \cU_{\ell,x}^{-}  \}, 	
\end{multline*}
and
\begin{equation*}
	  \cF_{\ell,x,k}^{+} :=  \left\{\left(\bar{z},\bar{x},\bar{u}\right):\bar{z}=z_{\ell,x,k},\bar{x}=x,\bar{u} \in \cU_{\ell,x}^{+} \right\},
\end{equation*} 

For a fixed $\bar{u}$, assume that 
$
    \bar{u} = u_{1,\alpha_1} = \cdots = u_{n,\alpha_n}.
$
Then, we write \eqref{eq:alg-prob-dis} as 
\begin{align}
	& \sum_{\ell=1}^{n-1} \sum_{z:\left(z,\bar{x},\bar{u}\right) \in \cF_{\ell}} g\left(z,\bar{x},\bar{u}\right)  \nonumber  = \sum_{\ell=1}^{n-1} \sum_{x=1}^{n}  \sum_{k:\left(\cdot,\bar{x},\bar{u}\right) \in \cF_{\ell,x,k}} \hspace{-5pt} \nu_{\ell,x,k} \nonumber \\
	& = \sum_{\ell=1}^{n-1} \sum_{x:\alpha_x \leq \ell-1} \sum_{k:\bar{x} = \zeta_{\ell,x,k}(\alpha_x) }  \nu_{\ell,x,k} + \sum_{\ell=1}^{n-1}  \sum_{k: \alpha_{\bar{x}} \geq \ell}  \nu_{\ell,\bar{x},k}  \nonumber \\
	& = \sum_{\ell=1}^{n-1} \sum_{x:\alpha_x \leq \ell-1} \sum_{k:\bar{x} = \zeta_{\ell,x,k}(\alpha_x) } \hspace{-10pt} \nu_{\ell,x,k} +\hspace{-10pt} \sum_{\ell=1}^{\min\{n-1,\alpha_{\bar{x}}\}}  \hspace{-5pt}\sum_{k} \nu_{\ell,\bar{x},k}.  \label{eq:alg-prob-dis-two-terms}
\end{align}
From \eqref{eq:alg-sum-nu}, we know that 
$
\sum_{k} \nu_{\ell,\bar{x},k} =  \lambda_{\bar{x},\ell} - \lambda_{\bar{x},\ell-1} ,
$
and hence the second term of \eqref{eq:alg-prob-dis-two-terms} can be written as
\begin{align}
	\sum_{\ell=1}^{\min\{n-1,\alpha_{\bar{x}}\}}  \sum_{k} \nu_{\ell,\bar{x},k} 
	& = \lambda_{\bar{x},\min\{n-1,\alpha_{\bar{x}}\}}. \label{eq:alg-prob-dis-term2}
\end{align}

For the first term of \eqref{eq:alg-prob-dis-two-terms}, we know from \eqref{eq:alg-multiset-value-zx} that
	\begin{equation*}
		\sum_{k: \zeta_{\ell,x,k}(\alpha_x)=\bar{x}} \nu_{\ell,x,k} = v_{\ell,x,\alpha_x,j}, 
	\end{equation*}
and $\bar{x} = e_{\ell,x,i,j}$ for some $j$, where $v_{\ell,x,\alpha_x,j}$ and  $e_{\ell,x,i,j}$ are defined in \eqref{eq:alg-row-compensate}. Then, we know from \eqref{eq:alg-update} that  
	\begin{equation*}
		 \sum_{k: \zeta_{\ell,x,k}(\alpha_x)=\bar{x}} \nu_{\ell,x,k} = v_{\ell,x,\alpha_x,j} = M_{u_{x,\alpha_x},\bar{x}}^{-} = M_{\bar{u},\bar{x}}^{-},
	\end{equation*}
\Ie the value subtracted from $M_{\bar{u},\bar{x}}$ for given $\ell$ and $x$. Thus, we have
\begin{align}
	& \sum_{\ell=1}^{n-1} \sum_{x:\alpha_x \leq \ell-1} \sum_{k:\bar{x} = \zeta_{\ell,x,k}(\alpha_x) }  \nu_{\ell,x,k}   = \sum_{\ell=1}^{n-1} \sum_{x:\alpha_x \leq \ell-1} M_{\bar{u},\bar{x}}^{-} \nonumber\\
	& = \sum_{\ell=1}^{n-1} \sum_{x: \bar{u} \in \cU_{\ell,x}^{-}} M_{\bar{u},\bar{x}}^{-},
	\label{eq:alg-prob-dis-term1}
\end{align}
\Ie all values subtracted from $M_{\bar{u},\bar{x}}$ for $\ell=1,\ldots,n-1$ and all $x$.	
By substituting \eqref{eq:alg-prob-dis-term1} and \eqref{eq:alg-prob-dis-term2} in \eqref{eq:alg-prob-dis-two-terms}, we have 
\begin{multline}
 \sum_{\ell=1}^{n-1} \sum_{z:\left(z,\bar{x},\bar{u}\right) \in \cF_{\ell}} g\left(z,\bar{x},\bar{u}\right)   \\
=  \sum_{\ell=1}^{n-1} \sum_{x: \bar{u} \in \cU_{\ell,x}^{-}} M_{\bar{u},\bar{x}}^{-} + \lambda_{\bar{x},\min\{n-1,\alpha_{\bar{x}}\}}. \label{eq:statement3-first-term}
\end{multline}

Then, substituting \eqref{eq:statement3-first-term} in \eqref{eq:statement3-two-terms}, we have 
\begin{align*}
	& \sum_{z:\left(z,\bar{x},\bar{u}\right) \in \cF} g\left(z,\bar{x},\bar{u}\right)  \\
	& = \sum_{\ell=1}^{n-1} \sum_{z:\left(z,\bar{x},\bar{u}\right) \in \cF_{\ell}} g\left(z,\bar{x},\bar{u}\right) +  \sum_{z:\left(z,\bar{x},\bar{u}\right) \in \cF_{n}} g\left(z,\bar{x},\bar{u}\right)  \nonumber \\
	& =
	\sum_{\ell=1}^{n-1} \sum_{x: \bar{u} \in \cU_{\ell,x}^{-}} M_{\bar{u},\bar{x}}^{-} + \lambda_{\bar{x},\min\{n-1,\alpha_{\bar{x}}\}} +\\[-10pt]
	&\hspace{130pt}\sum_{z:\left(z,\bar{x},\bar{u}\right) \in \cF_{n}} g\left(z,\bar{x},\bar{u}\right).  
\end{align*}

Recalling that we assign all the remaining values $M_{\bar{u},\bar{x}}$ to $g\left(z,\bar{x},\bar{u}\right)$ in \eqref{eq:argument-n} and \eqref{eq:assign-n} when $\ell=n$, we know that 
\begin{equation*}
	\begin{aligned}
		& \sum_{\ell=1}^{n-1} \sum_{x: \bar{u} \in \cU_{\ell,x}^{-}} M_{\bar{u},\bar{x}}^{-} + \sum_{z:\left(z,\bar{x},\bar{u}\right) \in \cF_{n}} g\left(z,\bar{x},\bar{u}\right) \\
		& = \max\left\{\pb{X=\bar{x}|U=\bar{u}}- \lambda_{\bar{x},n-1}, 0\right\},
	\end{aligned}
\end{equation*}
\Ie the initial value of $M_{\bar{u},\bar{x}}$ defined in \eqref{eq:alg-initialize}. 
Therefore,  
\begin{align*}
	& \sum_{z:\left(z,\bar{x},\bar{u}\right) \in \cF} g\left(z,\bar{x},\bar{u}\right)  \\
	& = \sum_{\ell=1}^{n-1} \sum_{x: \bar{u} \in \cU_{\ell,x}^{-}} M_{\bar{u},\bar{x}}^{-} + \lambda_{\bar{x},\min\{n-1,\alpha_{\bar{x}}\}} + \\[-10pt] &\hspace{130pt}\sum_{z:\left(z,\bar{x},\bar{u}\right) \in \cF_{n}}g\left(z,\bar{x},\bar{u}\right) \\
	& = \max\left\{\pb{X=\bar{x}|U=\bar{u}}- \lambda_{\bar{x},n-1}, 0\right\} +\\ &\hspace{160pt}\lambda_{\bar{x},\min\{n-1,\alpha_{\bar{x}}\}} \\
	& = \max\left\{\pb{\bar{x}|\bar{u}}- \lambda_{\bar{x},n-1}, 0\right\} 
	+ \min\left\{ \lambda_{\bar{x},n-1},    \pb{\bar{x}|\bar{u}}\right\} \\
	& = \pb{\bar{x}|\bar{u}},
\end{align*}
which completes the proof.

\end{enumerate}

\section{Proof of Proposition~\ref{proposition:alg-performance}}
\label{Appendix:proposition-perform}
The proof is quite straightforward from previous intermediate steps. As we know that $\pb{z|u} = \pb{z|u'}$ for any $u, u' \in \cN^2$ and $z \in \cZ$ from Proposition~\ref{proposition:alg-justification}, for any given $\ell \in \{1,\ldots,n-1\}$, we have
\begin{align*}
	\sum_{z:|z|=\ell} \pb{z} & = \sum_{z:|z|=\ell} \pb{z|u}  = \sum_{z:|z|=\ell} \sum_{x:\left(z,x,u \right) \in \cF} g\left(z,x,u\right) \\
	& \utag{a}{=} \sum_{z:|z|=\ell} \sum_{x'=1}^{n} \sum_{k=1}^{c_{\ell,x'}}  \nu_{\ell,x',k} \cdot \mathbbm{1}\left\{\left(z,\cdot,\cdot\right) \in \cF_{\ell,x',k}  \right\} \\
	& = \sum_{x'=1}^{n} \sum_{k=1}^{c_{\ell,x'}}  \nu_{\ell,x',k} \cdot \sum_{z:|z|=\ell} \mathbbm{1}\left\{\left(z,\cdot,\cdot\right) \in \cF_{\ell,x',k}  \right\} \\
	& \utag{b}{=} \sum_{x'=1}^{n} \sum_{k=1}^{c_{\ell,x'}}  \nu_{\ell,x',k}  \utag{c}{=} \sum_{x'=1}^{n} \lambda_{x',\ell} - \lambda_{x',\ell-1} =\theta_{\ell},
\end{align*}
where \uref{a} follows from	\eqref{eq:alg-privacy-algorithm-justification}, \uref{b} follows from \eqref{eq:unique-z}, and \uref{c} follows from \eqref{eq:alg-sum-nu}.
For $\ell=n$, we have 
\begin{equation*}
	\sum_{z:|z|=n} \pb{z}= 1- \sum_{\ell=1}^{n-1} \sum_{z:|z|=\ell} \pb{z} = 1- \sum_{\ell=1}^{n-1}\theta_{\ell} = \theta_{n}
\end{equation*}
by definition, then
$\pb{|Z|=\ell} = \theta_{\ell}$, for all $\ell~=~1,\dots,n$.

\section{Proof of Lemma~\ref{lemma:outer}}
\label{Appendix:lemma-outer}
Consider
    \begin{align*}
        \max_{u \in \cN^2} p\left(x|u\right) 
        & \utag{a}{=} \max_{u \in \cN} \sum_{y:x \in y} p\left(x, y|u\right) \\
        & = \max_{u \in \cN^2} \sum_{y:x \in y} p\left(y|u\right) p\left(x|y,u\right) \\
        & \utag{b}{=} \max_{u \in \cN^2} \sum_{y:x \in y} p\left(y\right) p\left(x| y,u\right) \\
        & \leq \sum_{y:x \in y} p\left(y\right) \max_{u \in \cN^2}  p\left(x| y,u\right) \\
        & \leq \sum_{y:x \in y} p\left(y\right), 
    \end{align*}
where \uref{a} follows from $p(x,y|u)=0$ for $x \notin y$, and \uref{b} follows because $Y$ is independent of $U$.
Thus, we obtain that 
\begin{align*}
     \sum_{x \in \cN} \max_{u \in \cN^2} p\left(x|u\right) 
    & \leq  \sum_{x \in \cN} \sum_{y:x \in y} p\left(y\right) 
     =  \sum_{y \in \sP\left(\cN\right)} \sum_{x:x \in y} p\left(y\right) \\
    & =  \sum_{y \in \sP\left(\cN\right)} p\left(y\right) \sum_{x:x \in y} 1 
     = \mathbb{E}\left[|Y|\right],
\end{align*}
which completes the proof.

\end{document}